\newtheorem{theorem}{Theorem}
\newtheorem{lemma}{Lemma}
\newcommand{\figwidth}{8.8}
\begin{document}
%
\title{Joint Power Control and Beamforming for Uplink Non-Orthogonal Multiple Access in 5G Millimeter-Wave Communications}
%
%
%
\author{Lipeng Zhu,
        Jun Zhang,
        Zhenyu Xiao,~\IEEEmembership{Senior Member,~IEEE,}
        Xianbin Cao,~\IEEEmembership{Senior Member,~IEEE,}
        Dapeng Oliver Wu,~\IEEEmembership{Fellow,~IEEE}
        and Xiang-Gen Xia,~\IEEEmembership{Fellow,~IEEE}
\thanks{L. Zhu, J. Zhang, Z. Xiao and X. Cao are with the School of
Electronic and Information Engineering, Beihang University, Beijing 100191, China.}
\thanks{D. O. Wu is with the Department of Electrical and Computer Engineering, University of Florida, Gainesville, FL 32611, USA.}
\thanks{X.-G. Xia is with the Department of Electrical and Computer Engineering, University of Delaware, Newark, DE 19716, USA.}
}

%
%

\maketitle

\begin{abstract}
In this paper, we investigate the combination of two key enabling technologies for the fifth generation (5G) wireless mobile communication, namely millimeter-wave (mmWave) communications and non-orthogonal multiple access (NOMA). In particular, we consider a typical 2-user uplink mmWave-NOMA system, where the base station (BS) equips an analog beamforming structure with a single RF chain and serves 2 NOMA users. An optimization problem is formulated to maximize the achievable sum rate of the 2 users while ensuring a minimal rate constraint for each user. The problem turns to be a joint power control and beamforming problem, i.e., we need to find the beamforming vectors to steer to the two users simultaneously subject to an analog beamforming structure, and meanwhile control appropriate power on them. As direct search for the optimal solution of the non-convex problem is too complicated, we propose to decompose the original problem into two sub-problems that are relatively easy to solve: one is a power control and beam gain allocation problem, and the other is an analog beamforming problem under a constant-modulus constraint. The rational of the proposed solution is verified by extensive simulations, and the performance evaluation results show that the proposed sub-optimal solution achieve a close-to-bound uplink sum-rate performance.
\end{abstract}

\begin{IEEEkeywords}
NOMA, Non-orthogonal multiple access, mmWave-NOMA, millimeter-wave communications, beamforming, power control, 5G.
\end{IEEEkeywords}

%
\IEEEpeerreviewmaketitle

\section{Introduction}
\IEEEPARstart{R}{ecently}, the fifth generation (5G) mobile communication has drawn worldwide attention, and the commercial use of 5G is approaching. The fast growth of mobile Internet has propelled 1000-fold data traffic increase by 2020. Apparently, large area capacity is one of the most important requirements of 5G \cite{andrews2014will}. In order to improve the capacity of the 5G network, there are mainly three candidate key technologies, namely extreme densification of cells, millimeter-wave (mmWave) communication and massive multiple-input multiple-output(MIMO) \cite{andrews2014will,niu2015survey,rapp2013mmIEEEAccess}. Indeed, mmWave communication promises a much higher capacity than the legacy low-frequency (i.e., micro-wave band) mobile communications because of abundant frequency spectrum resource. For this reason, it is also considered as a future technology to improve the transmission capacity in the airborne communications, e.g., unmanned aerial vehicle (UAV) communications \cite{xiao2016UAV}.


On the other hand, the non-orthogonal multiple access (NOMA) technique has recently received considerable attention as a promising multiple access technique to be used in 5G mobile communications \cite{ding2014performance,saito2013non,Ding2015Cooperative,Dai2015NOMA5G,Benjebbour2013ConceptNOMA,
UplinkNOMAwithMulti-Antenna,uplinkNOMAMIMO,Uplink_NOMA_in_5G}. In contrast to the conventional orthogonal multiple access (OMA) schemes, NOMA serves multiple users in one orthogonal resource to improve the spectrum efficiency, as well as increase the number of users. The application of NOMA in future mobile networks can meet the requirements to support massive connectivity of users and/or devices and meet the demand for low latency, low-cost devices, and diverse service types \cite{andrews2014will}. By using sophisticated power allocation at the transmitters, as well as successive interference cancellation (SIC) to mitigate multi-user interference at receivers, the number of users and the spectrum efficiency can be significantly improved, especially when the channel conditions of the users are quite different \cite{ding2014performance,saito2013non,Ding2015Cooperative,Dai2015NOMA5G,Benjebbour2013ConceptNOMA,
UplinkNOMAwithMulti-Antenna,Uplink_NOMA_in_5G,uplinkNOMAMIMO}.

In the way to use mmWave communication in 5G cellular, a big challenge is to support a great number of users. Subject to the hardware cost, the number of radio-frequency (RF) chains in an mmWave device is usually much smaller than that of antennas. As a result, the maximal number of users that can be served within one time/frequency/code resource block is very limited, i.e., usually no larger than the number of RF chains \cite{Xia_2011_60GHz_Tech,xia_2008_prac_ante_traning,wang_2009_beam_codebook,xiao2017mmWaveFD,alkhateeb2014mimo,roh2014millimeter,sun2014mimo}. In such a case, NOMA is with significance for mmWave communication to greatly increase the number of users, and meanwhile increase the usage efficiency of the acquired spectrum to support the exponential
traffic growth. Moreover, mmWave communications usually uses the highly directional feature of mmWave propagation, which makes the users' channels (along the same direction) highly correlated and hence facilitates the application of NOMA. For these reasons, we investigate NOMA in mmWave communications (mmWave-NOMA) \cite{Ding2017random,Daill2017} in this paper .

Different from the conventional micro-wave band communications, antenna array is usually adopted in mmWave communications to achieve high array gain to bridge the link budget gap due to the extremely high path loss, which means that beamforming is usually entangled with power control/allocation in mmWave-NOMA. Moreover, since the number of RF chains is usually much smaller than that of antennas in mmWave communications \cite{Xia_2011_60GHz_Tech,xia_2008_prac_ante_traning,wang_2009_beam_codebook,alkhateeb2014mimo,roh2014millimeter,sun2014mimo}, the joint beamforming and power control/allocation in mmWave-NOMA behaves quite differently from that in the conventional MIMO systems, where a fully digital beamforing structure is usually adopted, i.e., the numbers of RF chains and antennas are equal. In \cite{Uplink_NOMA_in_5G}, power control was explored so as to maximize the sum rate with minimal user rate guaranteed, but the power control problem is studied under fixed beam pattern. In \cite{Ding2017random}, the performance of mmWave-NOMA was analyzed by assuming random beamforming with fixed power allocation. In \cite{Daill2017}, the new concept of beamspace MIMO-NOMA with a lens-array hybrid beamforming structure was firstly proposed to use multi-beam forming to serve multiple NOMA users with arbitrary locations, thus the limit that the number of supported users cannot be larger than the number of RF chains can be broken. However, the power allocation problem is studied under fixed beam pattern when lens array is considered. In \cite{xiao2017Downlink_NOMA}, joint power allocation and beamforming was explored in a 2-user downlink mmWave-NOMA scenario with a constant-modulus (CM) phased array. Different from \cite{Ding2017random,Uplink_NOMA_in_5G,Daill2017,xiao2017Downlink_NOMA}, we consider joint power control and beamforming to maximize the sum rate of a 2-user \emph{uplink} mmWave-NOMA system using an analog beamforming structure with a CM phased array. Note that a significant difference between this paper and \cite{xiao2017Downlink_NOMA} is that we consider uplink transmission here, where the user achievable rates and achievable sum rate have different expressions and thus the problem formulation is different from the downlink case in \cite{xiao2017Downlink_NOMA}. As a result, new techniques that are different from those in \cite{xiao2017Downlink_NOMA} are required to formulate and solve the sub-problems. In fact, to the best of our knowledge, there is still no work considering uplink mmWave-NOMA in the literature.

The formulated joint power control and beamforming problem is difficult to solve because it is not convex. Direct search for the optimal solution is too complicated, because the number of variables is large due to the large number of antennas. Hence, we propose to decompose the original problem into two sub-problems which are relatively easy to solve. One sub-problem is a power control and beam gain allocation problem, which can be solved directly with an analytical approach, and the other is a beamforming problem under the CM constraint, which can be converted into a standard convex optimization problem. Extensive performance evaluations are conducted to verify the rational of the proposed solution, and the results show that the proposed sub-optimal solution achieve a close-to-bound uplink sum-rate performance.

The rest of the paper is organized as follows. In Section II, we present the system model and formulate the problem. In Section III, we propose the solution. In Section IV, simulation results are given to demonstrate the performance of the proposed solution, and the paper is concluded lastly in Section V.

Symbol Notation: $a$ and $\mathbf{a}$ denote a scalar variable and a vector, respectively. $(\cdot)^{\rm{*}}$, $(\cdot)^{\rm{T}}$ and $(\cdot)^{\rm{H}}$ denote conjugate, transpose and conjugate transpose, respectively. $|\cdot|$ and $\|\cdot\|$ denote the absolute value and two-norm, respectively. $\mathbb{E}(\cdot)$ denotes the expectation operation. $\mathrm{Re}(\cdot)$ denotes the real part of a complex number. $[\mathbf{a}]_i$ denotes the $i$-th entry of $\mathbf{a}$.

\section{System Model and Problem Formulation}
\subsection{System model}
In NOMA systems, the multi-user interference will increase with the number of users served within one time/frequency/code resource block, which degrades the average rate of each user and increase the average decoding delay \cite{Benjebbour2013ConceptNOMA,Sun2015MIMONOMA,Ding2017random,Ding2017pairing,Ding2016pairing}. For this reason, the number of NOMA users is not large in general. Without loss of generality, we consider an uplink scenario with two users\footnote{The extension to more users will also be discussed later.} in this paper as shown in Fig. \ref{fig:system}, where a base station (BS) equipped with an $N$-element antenna array serves two users with a single antenna\footnote{In the case that the users also use an antenna array, Tx beamforming can be done first. Then the transmission processing at each user can be seen equivalent to a single-antenna transmitter, and the proposed solution in this paper can be used.}. At the BS, each antenna branch has a phase shifter and a low-noise amplifier (LNA) to drive the antenna. Generally, all the LNAs have the same scaling factor. Thus, the beamforming vector, i.e., the antenna weight vector (AWV), has constant-modulus (CM) elements. User $i$ ($i=1,~2$) transmits a signal $s_{i}$ to the BS, where $\mathbb{E}(\left | s_{i} \right |^{2})=1$, with transmission power $p_{i}$. The total transmission power of each User is restricted to $P$. With 2-user NOMA, signals $s_{1}$ and $s_{2}$ are superimposed at the BS as
\begin{equation}
y=\mathbf{h}_{1}^{\rm{H}}\mathbf{w}\sqrt{p_{1}}s_{1}+\mathbf{h}_{2}^{\rm{H}}\mathbf{w}\sqrt{p_{2}}s_{2}+\mathbf{n}^{\rm{H}}\mathbf{w}
\end{equation}
where ${\mathbf{h}}_{i}$ is channel response vectors between User $i$ and the BS, ${\mathbf{w}}$ denotes a CM beamforming vector with $|[{\mathbf{w}}]_k|=\frac{1}{\sqrt{N}}$ for $k=1,2,...,N$, and $\mathbf{n}$ is an $N$-dimention vector that denotes the Gaussian white noises of $N$-antenna at the BS with power $\sigma^{2}$.
\begin{figure}[t]
\begin{center}
  \includegraphics[width=\figwidth cm]{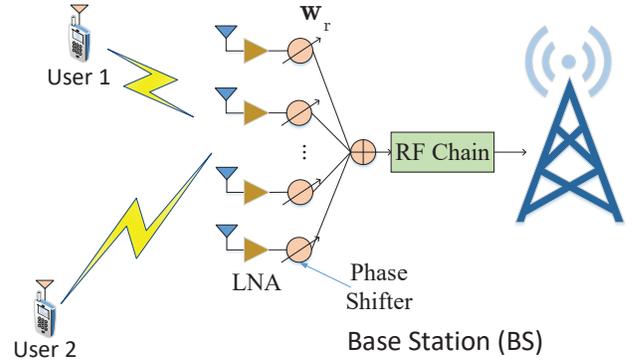}
  \caption{Illustration of an mmWave mobile cell, where one BS with $N$ antennas serves multiple users with one single antenna.}
  \label{fig:system}
\end{center}
\end{figure}

The channel between User $i$ and the BS is an mmWave channel. Subject to limited scattering in the mmWave band, multipath is mainly caused by reflection. As the number of the multipath components (MPCs) is small in general, the mmWave channel has directionality and appears spatial sparsity in the angle domain. Different MPCs have different angles of arrival (AoAs). Without loss of generality, we adopt the directional mmWave channel model assuming a uniform linear array (ULA) with a half-wavelength antenna space. Then an mmWave channel can be expressed as \cite{peng2015enhanced,wang2015multi,Lee2014exploiting,Gao2016ChannelEst,xiao2016codebook,xiao2017codebook}
\begin{equation} \label{eq_oriChannel}
\bar{\mathbf{h}}_{i}=\sum_{\ell=1}^{L_i}\lambda_{i,\ell}\mathbf{a}(N,\Omega_{i,\ell})
\end{equation}
where $\lambda_{i,\ell}$, $\Omega_{i,\ell}$ are the complex coefficient and cos(AoA) of the $\ell$-th MPC of the channel vector for User $i$, respectively, $L_i$ is the total number of MPCs for User $i$, ${\bf{a}}(\cdot)$ is a steering vector function defined as
\begin{equation} \label{eq_steeringVCT}
\mathbf{a}(N,\Omega)=[e^{j\pi0\Omega},e^{j\pi1\Omega},e^{j\pi2\Omega},\cdot\cdot\cdot,e^{j\pi(N-1)\Omega}]
\end{equation}
which depends on the array geometry. Let $\theta_{i,\ell}$ denote the real AoA of the $\ell$-th MPC for User $i$, then we have $\Omega_{i,\ell}=\cos(\theta_{i,\ell})$. Therefore, $\Omega_{i,\ell}$  is within the range $[-1, 1]$. For convenience and without loss of generality, in the rest of this paper, $\Omega_{i,\ell}$ is also called AoA.

For each user, the BS would perform beamforming toward the angle direction along the AoA of the strongest MPC to achieve a high array gain. In general, if there is no blockage between the BS and a user, the line-of-sight (LOS) component will be adopted for beamforming, as it has a much higher strength than the non-LOS (NLOS) components. If the LOS component is blocked, the strongest NLOS path would be selected for beamforming. Since the mmWave channel is spatially sparse, we can obtain an effective channel model for the original channel model \eqref{eq_oriChannel} as
\begin{equation} \label{eq_effChannel}
\mathbf{h}_{i}=\lambda_{i}\mathbf{a}(N,\Omega_{i})
\end{equation}
where $\lambda_{i}=\lambda_{i,m_i}$ and $\Omega_{i}=\Omega_{i,m_i}$. Here $m_i$ denotes the index of the strongest MPC for User $i$. Since the effective channel model \eqref{eq_effChannel} is simpler, we adopt it in the derivation and analysis in this paper, while in the performance evaluations we also consider the original channel model in \eqref{eq_oriChannel}. Without loss of generality, we assume $|\lambda_{1}|\geq |\lambda_{2}|$, which means that the channel gain of User 1 is better.

\subsection{Decoding order}
In the conventional uplink NOMA with single-antenna BS and users, usually the information of the user with a higher channel gain is decoded first to maximize the sum rate. In contrast, in mmWave-NOMA, the decoding order depends on both channel gain and beamforming gain. Thus, there are two cases for the 2-user uplink mmWave-NOMA system.

\emph{Case 1:} $s_1$ is decoded first. Then $s_2$ is decoded after subtracting the signal component of $s_1$. With this decoding method, the achievable rates of User $i~(i=1,2)$, denoted by $R_i$ are represented as
\begin{equation}
\left\{\begin{aligned}
R_{1}^{(1)}&=\log_{2}(1+ \frac{\left |\mathbf{h}_{1}^{\rm{H}}\mathbf{w} \right |^{2}p_{1}}{\left |\mathbf{h}_{2}^{\rm{H}}\mathbf{w} \right |^{2}p_{2}+\sigma^{2}}) \\
R_{2}^{(1)}&=\log_{2}(1+ \frac{\left |\mathbf{h}_{2}^{\rm{H}}\mathbf{w} \right |^{2}p_{2}}{\sigma^{2}})
\end{aligned}\right.
\end{equation}

\emph{Case 2:} $s_2$ is decoded first. Then $s_1$ is decoded after subtracting the signal component of $s_2$. With this decoding method, the achievable rates of User $i~(i=1,2)$, denoted by $R_i$ are represented as
\begin{equation}
\left\{\begin{aligned}
R_{1}^{(2)}&=\log_{2}(1+ \frac{\left |\mathbf{h}_{1}^{\rm{H}}\mathbf{w} \right |^{2}p_{1}}{\sigma^{2}}) \\
R_{2}^{(2)}&=\log_{2}(1+ \frac{\left |\mathbf{h}_{2}^{\rm{H}}\mathbf{w} \right |^{2}p_{2}}{\left |\mathbf{h}_{1}^{\rm{H}}\mathbf{w} \right |^{2}p_{1}+\sigma^{2}})
\end{aligned}\right.
\end{equation}

The expressions of the achievable sum rate of under different decoding orders are identical, which can be calculated directly as
\begin{equation} \label{eq_sum_rate}
R_{1}+R_{2}=\log_{2}(1+ \frac{\left |\mathbf{h}_{1}^{\rm{H}}\mathbf{w} \right |^{2}p_{1}+\left |\mathbf{h}_{2}^{\rm{H}}\mathbf{w} \right |^{2}p_{2}}{\sigma^{2}})
\end{equation}

\subsection{Problem Formulation}
An immediate and basic problem is how to maximize the achievable sum rate of the two users provided that the channel is known a priori. It is clear that if there are no minimal rate constraints for the two users, the achievable sum rate can be maximized by transmitting signal with the maximal user power at each user and meanwhile allocating all the beamforming gain toward User 1, whose channel gain is better. However, when there are minimal rate constraints for the two users, the power control intertwines with the beamforming design, which makes the problem complicated under the system setup. In this paper, we intend to address this problem, which is formulated by

\begin{equation}\label{eq_problem}
\begin{aligned}
\mathop{\mathrm{Maximize}}\limits_{p_1,p_2,{\mathbf{w}}}~~~~ &R_{1}+R_{2}\\
\mathrm{Subject~ to}~~~ &R_{1} \geq r_{1}\\
&R_{2} \geq r_{2} \\
&0 \leq p_{1},p_{2} \leq P \\
&|[{\mathbf{w}}]_k|=\frac{1}{\sqrt{N}},~k=1,2,...,N\\
\end{aligned}
\end{equation}
where $r_i$ denotes the minimal rate constraint for User $i$, $|[{\bf{w}}]_k|=\frac{1}{\sqrt{N}}$ is the CM constraint due to using the phase shifters in each antenna branch at the BS. Note that the expressions of $R_1$ and $R_2$ are different for different decoding orders. In Case $i$ ($i=1,2$), $R_1=R_1^{(i)}$ and $R_2=R_2^{(i)}$. However, we will prove in the next section that the achievable sum rate of Case 1 is better than that of Case 2. The above Problem \eqref{eq_problem} is challenging, not only due to the non-convex constraints and the objective function, but also due to that the parameters to be optimized are entangled with each other.

\section{Solution of the Problem}
Clearly, directly solving Problem \eqref{eq_problem} by using the existing optimization tools is infeasible, because the problem is non-convex and may not be converted to a convex problem with simple manipulations. On the other hand, to directly search the optimal solution is also computationally prohibitive because the dimension is $(N+2)$, where $N$ is large in general. In this section, we propose a suboptimal solution to this problem. The
basic idea is to decompose the original problem \eqref{eq_problem} into two sub-problems which are relatively easy to solve, and then we solve them one by one.

\subsection{Problem Decomposition}

Since power control intertwines with beamforming under the CM constraint, we first try to decompose them.
Let $c_{1}=\left|{{\bf{h}}}_{1}^{\rm{H}}\mathbf{w}\right|^{2}$ and $c_{2}=\left|{{\bf{h}}}_{2}^{\rm{H}}\mathbf{w}\right|^{2}$ denote the beam gains for User 1 and User 2, respectively. We directly give the following lemma which has been proven in \cite{xiao2017Downlink_NOMA}:
\begin{lemma} With the ideal beamforming, the beam gains satisfy
\begin{equation}\label{ideal_condition}
\frac{c_{1}}{\left|\lambda_{1}\right|^{2}}+\frac{c_{2}}{\left|\lambda_{2}\right|^{2}}=N
\end{equation}
where $N$ is the number of antennas.
\end{lemma}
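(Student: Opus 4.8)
The plan is to translate the claimed identity into a statement purely about the normalized array steering vectors and then recognize it as a Parseval-type energy identity. First I would factor the channel out of the beam gain: since $\mathbf{h}_{i}=\lambda_{i}\mathbf{a}(N,\Omega_{i})$, we have $c_{i}=|\mathbf{h}_{i}^{\mathrm{H}}\mathbf{w}|^{2}=|\lambda_{i}|^{2}\,|\mathbf{a}(N,\Omega_{i})^{\mathrm{H}}\mathbf{w}|^{2}$, so that $c_{i}/|\lambda_{i}|^{2}=|\mathbf{a}(N,\Omega_{i})^{\mathrm{H}}\mathbf{w}|^{2}$ depends only on the array response and $\mathbf{w}$, not on the complex path gains $\lambda_{i}$. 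Hence the lemma is equivalent to showing $|\mathbf{a}(N,\Omega_{1})^{\mathrm{H}}\mathbf{w}|^{2}+|\mathbf{a}(N,\Omega_{2})^{\mathrm{H}}\mathbf{w}|^{2}=N$.

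Next I would record the two norm facts that drive everything. The CM constraint $|[\mathbf{w}]_{k}|=1/\sqrt{N}$ gives $\|\mathbf{w}\|^{2}=\sum_{k=1}^{N}1/N=1$, while each steering vector has $\|\mathbf{a}(N,\Omega_{i})\|^{2}=N$ because every entry has unit modulus. Introducing the unit-norm normalized steering vectors $\mathbf{b}_{i}=\mathbf{a}(N,\Omega_{i})/\sqrt{N}$, the target identity rewrites as $|\mathbf{b}_{1}^{\mathrm{H}}\mathbf{w}|^{2}+|\mathbf{b}_{2}^{\mathrm{H}}\mathbf{w}|^{2}=1=\|\mathbf{w}\|^{2}$. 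I would then invoke what \emph{ideal beamforming} supplies: the two beams are steered toward well-separated directions so that $\mathbf{b}_{1}$ and $\mathbf{b}_{2}$ are (treated as) orthonormal, and $\mathbf{w}$ wastes no array energy outside the beam plane, i.e. $\mathbf{w}\in\mathrm{span}\{\mathbf{b}_{1},\mathbf{b}_{2}\}$. Writing $\mathbf{w}=\beta_{1}\mathbf{b}_{1}+\beta_{2}\mathbf{b}_{2}$, orthonormality gives $\mathbf{b}_{i}^{\mathrm{H}}\mathbf{w}=\beta_{i}$ and $\|\mathbf{w}\|^{2}=|\beta_{1}|^{2}+|\beta_{2}|^{2}=1$. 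Substituting back yields $c_{1}/|\lambda_{1}|^{2}+c_{2}/|\lambda_{2}|^{2}=N(|\beta_{1}|^{2}+|\beta_{2}|^{2})=N$, which is the assertion; this is just Parseval's identity for the orthonormal pair $\{\mathbf{b}_{1},\mathbf{b}_{2}\}$.

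The main obstacle is conceptual rather than computational: pinning down exactly what \emph{ideal beamforming} encodes. For a finite ULA the inner product $\mathbf{a}(N,\Omega_{1})^{\mathrm{H}}\mathbf{a}(N,\Omega_{2})=\sum_{k=0}^{N-1}e^{j\pi k(\Omega_{2}-\Omega_{1})}$ is a nonzero Dirichlet kernel, so $\mathbf{b}_{1},\mathbf{b}_{2}$ are only asymptotically orthogonal, and the CM constraint generically forbids $\mathbf{w}$ from lying exactly in their two-dimensional span. The identity is therefore an idealization, and I would state explicitly that it holds under the twin idealizing assumptions of orthogonal steering directions and no energy leakage. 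Equivalently, it is the \emph{equality case} of the Bessel/Cauchy--Schwarz bound $c_{1}/|\lambda_{1}|^{2}+c_{2}/|\lambda_{2}|^{2}\le N$ that holds for every unit-norm $\mathbf{w}$; framing it this way also clarifies that $N$ is the total available array energy being split between the two beams. Finally I would note that this clean linear coupling between $c_{1}$ and $c_{2}$ is precisely what decouples beam-gain allocation from the beamforming realization and makes the subsequent power-control-and-beam-gain-allocation sub-problem tractable.
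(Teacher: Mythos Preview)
Your argument is correct and, in fact, more explicit than anything the paper offers: the paper does not prove Lemma~1 at all but simply imports it from \cite{xiao2017Downlink_NOMA} and treats the relation \eqref{ideal_condition} as an operating assumption (``ideal beamforming'') under which the sub-problem decomposition is carried out. Your Parseval-type derivation---reducing $c_i/|\lambda_i|^2$ to $|\mathbf{a}(N,\Omega_i)^{\mathrm H}\mathbf{w}|^2$, normalizing the steering vectors, and then invoking orthonormality of $\{\mathbf{b}_1,\mathbf{b}_2\}$ together with $\mathbf{w}\in\mathrm{span}\{\mathbf{b}_1,\mathbf{b}_2\}$---is exactly the right way to make the ideal-beamforming hypothesis precise and to see why the constant on the right is $N\|\mathbf{w}\|^2=N$.

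Your diagnosis of the ``obstacle'' is also on point and worth keeping: the identity is not a theorem about arbitrary CM vectors and finite ULAs but the equality case of a Bessel-type inequality, attained only under the twin idealizations you name (asymptotic orthogonality of the two steering directions and no energy leakage outside their span). The paper implicitly acknowledges this when it later calls the resulting sum rate an \emph{upper bound} on the original problem and checks numerically (Figs.~\ref{fig:beam_gain_N}--\ref{fig:relative_error_avr}) that the gap is small. So there is no missing idea in your proposal; if anything, you have articulated the content of ``ideal beamforming'' more carefully than the paper itself does.
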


Based on Lemma 1, we can rewrite Problem \eqref{eq_problem} with the beamforming gains. Since the sum rate expressions are different for different decoding orders, the problems are also different for different cases. Problem \eqref{eq_problem} under the two cases can be re-described as\\
\emph{Case 1:}
\begin{equation} \label{eq_problem_sub1_case1}
\begin{aligned}
\mathop{\mathrm{Maximize}}\limits_{p_1,p_2,c_1,c_2}~~~~&\log_{2}(1+ \frac{c_{1}p_{1}+c_{2}p_{2}}{\sigma^{2}})\\
\mathrm{Subject~to} ~~~ &\log_{2}(1+ \frac{c_{1}p_{1}}{c_{2}p_{2}+\sigma^{2}}) \geq r_{1}\\
&\log_{2}(1+ \frac{c_{2}p_{2}}{\sigma^{2}}) \geq r_{2} \\
&0 \leq p_{1},p_{2} \leq P \\
&\frac{c_{1}}{\left|\lambda_{1}\right|^{2}}+\frac{c_{2}}{\left|\lambda_{2}\right|^{2}}=N\\
\end{aligned}
\end{equation}
\emph{Case 2:}
\begin{equation} \label{eq_problem_sub1_case2}
\begin{aligned}
\mathop{\mathrm{Maximize}}\limits_{p_1,p_2,c_1,c_2}~~~~&\log_{2}(1+ \frac{c_{1}p_{1}+c_{2}p_{2}}{\sigma^{2}})\\
\mathrm{Subject~to} ~~~ &\log_{2}(1+ \frac{c_{1}p_{1}}{\sigma^{2}}) \geq r_{1}\\
&\log_{2}(1+ \frac{c_{2}p_{2}}{{c_{1}p_{1}+\sigma^{2}}}) \geq r_{2} \\
&0 \leq p_{1},p_{2} \leq P \\
&\frac{c_{1}}{\left|\lambda_{1}\right|^{2}}+\frac{c_{2}}{\left|\lambda_{2}\right|^{2}}=N\\
\end{aligned}
\end{equation}
where $|{\bf{h}}_i^{\rm{H}}{\bf{w}}|^2$ is replaced by the beam gain $c_i$ ($i=1,2$). The CM constraint is not involved in Problems \eqref{eq_problem_sub1_case1} and \eqref{eq_problem_sub1_case2}, but will be considered in the following beamforming sub-problem. It is worthy to note that the objective functions are uniform under different decoding orders, which is distinguishing with the downlink scenario in \cite{xiao2017Downlink_NOMA}. For this reason, the optimal decoding order can be uniquely determined, which will be shown in the next subsection.

Next, we formulate the beamforming problem, i.e., to design $\bf{w}$ such that $|{\bf{h}}_i^{\rm{H}}{\bf{w}}|^2=c_i$ ($i=1,2$) under the CM constraint, which is formulated as follows:
\begin{equation} \label{eq_problem_sub2}
\begin{aligned}
&~\mathbf{w}\in \mathbb{C}^{N}\\
\mathrm{Subject~to}~~~&\left|{{\mathbf{h}}}_{1}^{\rm{H}}\mathbf{w}\right|^{2}=c_1\\
&\left|{{\mathbf{h}}}_{2}^{\rm{H}}\mathbf{w}\right|^{2} = {c_{2}} \\
&~|[{\mathbf{w}}]_k|=\frac{1}{\sqrt{N}},~k=1,2,...,N
\end{aligned}
\end{equation}

With the above manipulations, Problem \eqref{eq_problem} is decomposed into Problems \eqref{eq_problem_sub1_case1} and \eqref{eq_problem_sub2}, which are independent power control and beam gain allocation and beamforming sub-problems. Although the original problem is hard to solve, the two sub-problems are relatively easy to solve. Next, we will first solve Problem \eqref{eq_problem_sub1_case1}, and obtain the optimal solution $\{c_{1}^{\star},~c_{2}^{\star},~p_{1}^{\star},~p_{2}^{\star}\}$ of \eqref{eq_problem_sub1_case1}. Then $c_{2}^{\star}$ is used as the gain constraints in Problem \eqref{eq_problem_sub2}. We solve Problem \eqref{eq_problem_sub2} and obtain an appropriate ${\bf{w}}^\circ$. Although the obtained solution $\{p_{1}^{\star}, p_{2}^{\star}, {\bf{w}}^\circ\}$ is not globally optimal, the achieved sum rate performance is close to the upper bound, as it will be shown later in Section V. Next, we show how to solve the two sub-problems.

\subsection{Solution of the Power Control and Beam Gain Allocation Sub-Problem}

It is noteworthy that Problem \eqref{eq_problem_sub1_case1} and Problem \eqref{eq_problem_sub1_case2} are similar to each other, which means that if an approach can be used to solve one of them, it can also be used to solve the other. In fact, it will be shown later that the optimal sum rate of Case 1 is better than that of Case 2. For this reason, we just show the solution of Problem \eqref{eq_problem_sub1_case1} in detail. We first figure out the optimal $\{p_{1}^{\star}, p_{2}^{\star}\}$ in this subsection, and then, the optimal beam gains $\{c_{1}^{\star}, c_{2}^{\star}\}$.

\begin{lemma} With the ideal beamforming, the optimal transmission power is
\begin{equation}\label{optimal_power}
\left\{\begin{aligned}
&p_{1}^{\star}=P\\
&p_{2}^{\star}=P
\end{aligned}\right.
\end{equation}
\end{lemma}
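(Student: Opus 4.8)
The plan is to exploit the fact that the objective of Problem \eqref{eq_problem_sub1_case1} is a strictly increasing function of the total received signal power $S = c_1 p_1 + c_2 p_2$, so that maximizing the sum rate over the feasible set is the same as maximizing $S$. I would then push $S$ upward, treating $p_1$ and $p_2$ asymmetrically because they enter the rate constraints in different ways. Throughout, I keep the ideal-beamforming equality \eqref{ideal_condition} as an active coupling between $c_1$ and $c_2$, and I write $\gamma_1 = 2^{r_1}-1$.

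First I would settle $p_1^\star = P$. Holding $c_1$, $c_2$ and $p_2$ fixed, an increase of $p_1$ strictly increases $S$ (hence the objective), strictly relaxes the constraint $R_1^{(1)} \ge r_1$ since $p_1$ appears only in the numerator of $R_1^{(1)}$, and leaves $R_2^{(1)}$ untouched. Thus any feasible point with $p_1 < P$ can be strictly improved, and optimality forces $p_1^\star = P$.

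The decisive step is $p_2^\star = P$, which is harder because raising $p_2$ benefits the objective and $R_2^{(1)}$ but tightens $R_1^{(1)}$, whose denominator contains $c_2 p_2$. With $p_1 = P$ fixed I would split into two cases. If $R_1^{(1)} > r_1$ at a candidate optimum with $p_2 < P$, a small increase of $p_2$ (with $c_1$, $c_2$ held) preserves feasibility and strictly raises $S$, a contradiction. If instead $R_1^{(1)} = r_1$ is active, I would move along this active surface: the tightness condition $c_1 P = \gamma_1(c_2 p_2 + \sigma^2)$ together with \eqref{ideal_condition} determines $c_1$ and $c_2$ as explicit functions of $p_2$, and $c_2 p_2$ collapses to a ratio of the form $A p_2/(B p_2 + C)$ with positive constants. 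Its derivative $AC/(Bp_2+C)^2$ is strictly positive, and since $S = (1+\gamma_1)c_2 p_2 + \gamma_1\sigma^2$ on this surface, $S$ strictly increases with $p_2$; the same monotonicity of $c_2 p_2$ keeps $R_2^{(1)}$ feasible. Hence $p_2$ can always be raised to $P$, and no feasible point with $p_2 < P$ is optimal.

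The main obstacle is exactly this active-constraint case: I must show that the beam-gain reallocation required to protect $R_1^{(1)}$, namely shifting gain from User~2 to User~1, does not overwhelm the direct gain from a larger $p_2$. The positivity of the derivative of $A p_2/(B p_2 + C)$ is what closes the argument, and I would also record that feasibility guarantees $A = N - \gamma_1\sigma^2/(P|\lambda_1|^2) > 0$, so that $c_2 \ge 0$ is maintained along the whole perturbation.
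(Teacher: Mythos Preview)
Your argument is correct, and for $p_1^\star=P$ it coincides with the paper's. For $p_2^\star=P$, however, you take a genuinely different route. The paper does not split on whether $R_1^{(1)}$ is active; instead it exhibits a single explicit perturbation: starting from any putative optimum with $p_2^\star<P$, it sets $p_2=P$, rescales $c_2=c_2^\star p_2^\star/P$ so that the product $c_2p_2$ is \emph{unchanged}, and pours the freed beam-gain budget into $c_1$ via \eqref{ideal_condition}. Then $R_2^{(1)}$ is exactly preserved, $R_1^{(1)}$ strictly increases (larger numerator, same denominator), and the sum rate strictly increases, contradicting optimality in one stroke. Your approach instead parametrizes the active surface $R_1^{(1)}=r_1$, solves for $c_2p_2=Ap_2/(Bp_2+C)$, and differentiates; this is correct but buys the conclusion at the cost of a case split and an algebraic computation, plus an implicit continuation argument (first raise $p_2$ until $R_1^{(1)}$ becomes tight, then slide along the surface). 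The paper's ``keep $c_2p_2$ fixed'' trick is what lets it bypass all of that, and it is worth internalizing: when one constraint depends only on a product of two optimization variables, trading one factor against the other often yields a clean global perturbation.
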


\begin{proof}
Suppose the optimal solution of Problem \eqref{eq_problem_sub1_case1} is $p_{1}=p_{1}^{\star},~p_{2}=p_{2}^{\star},~ c_{1}=c_{1}^{\star},~c_{2}=c_{2}^{\star}$. With the optimal solution, the optimal user rates are $R_{1}=R_{1}^{\star}$ and $R_{2}=R_{2}^{\star}$, respectively.

Assume $p_{1}^{\star}<P$. We consider the parameter settings $p_{1}=P>p_{1}^{\star},~p_{2}=p_{2}^{\star},~ c_{1}=c_{1}^{\star},~c_{2}=c_{2}^{\star}$. Then we have
\begin{equation}
\left\{\begin{aligned}
&R_{1}=\log_{2}(1+ \frac{c_{1}^{\star}P}{c_{2}^{\star}p_{2}^{\star}+\sigma^{2}})>R_{1}^{\star} \geq r_{1}\\
&R_{2}=\log_{2}(1+ \frac{c_{2}^{\star}p_{2}^{\star}}{\sigma^{2}})=R_{2}^{\star} \geq r_{2}\\
&R_{1}+R_{2}>R_{1}^{\star}+R_{2}^{\star}
\end{aligned}\right.
\end{equation}
which means that the rate constraints are all satisfied while the value of the objective function becomes greater. Hence, the assumption of $p_{1}^{\star}<P$ does not hold. We have $p_{1}^{\star}=P$.

Analogously, we assume $p_{2}^{\star}<P$. We consider the parameter settings
\begin{equation}
\left\{\begin{aligned}
&p_{2}=P>p_{2}^{\star}\\
&c_{2}=\frac{c_{2}^{\star}p_{2}^{\star}}{p_2}=\frac{c_{2}^{\star}p_{2}^{\star}}{P}<c_{2}^{\star}\\
&p_{1}=p_{1}^{\star}\\
&c_{1}=\left|\lambda_{1}\right|^{2}(N-\frac{c_{2}}{\left|\lambda_{2}\right|^{2}})>c_{1}^{\star}
\end{aligned}\right.
\end{equation}

With these parameter settings, we have $c_{2}p_{2}=c_{2}^{\star}p_{2}^{\star}$, then
\begin{equation}
\left\{\begin{aligned}
R_{1}&=\log_{2}(1+ \frac{c_{1}p_{1}}{c_{2}p_{2}+\sigma^{2}})\\
&=\log_{2}(1+ \frac{c_{1}p_{1}^{\star}}{c_{2}^{\star}p_{2}^{\star}+\sigma^{2}})>R_{1}^{\star} \geq r_{1}\\
R_{2}&=\log_{2}(1+ \frac{c_{2}p_{2}}{\sigma^{2}})=\log_{2}(1+ \frac{c_{2}^{\star}p_{2}^{\star}}{\sigma^{2}})=R_{2}^{\star} \geq r_{2}\\
R_{1}&+R_{2}>R_{1}^{\star}+R_{2}^{\star}\\
\end{aligned}\right.
\end{equation}
which means that the rate constraints are all satisfied while the value of the objective function becomes greater. Hence, the assumption of $p_{2}^{\star}<P$ does not hold. We have $p_{2}^{\star}=P$.

With the above analyses, the value of the objective function can always increase in the feasible domain when increasing $p_{1}$ or $p_{2}$. Hereto, the optimal values of $p_{1}$ and $p_{2}$ are
\begin{equation}\label{optimal_power}
\left\{\begin{aligned}
&p_{1}^{\star}=P\\
&p_{2}^{\star}=P
\end{aligned}\right.
\end{equation}
\end{proof}

According to Lemma 1 and Lemma 2, we have $p_{1}=P,~p_{2}=P,~c_{1}=\left|\lambda_{1}\right|^{2}(N-\frac{c_{2}}
{\left|\lambda_{2}\right|^{2}})$. Substituting them into Problem \eqref{eq_problem_sub1_case1}, there is only one independent variable $c_{2}$ now. Hence, we can transform the problem as\\
\begin{equation}\label{eq_problem2_case1}
\begin{aligned}
\mathop{\mathrm{Maximize}}\limits_{c_{2}}~~~~ &\log_{2}(1+ \frac{(\left|\lambda_{1}\right|^{2}N-(\frac{\left|\lambda_{1}\right|^{2}}{\left|\lambda_{2}\right|^{2}}-1)c_{2})P}{\sigma^{2}})\\
\mathrm{Subject~ to}~~~ &\log_{2}(1+ \frac{\left|\lambda_{1}\right|^{2}(N-\frac{c_{2}}
{\left|\lambda_{2}\right|^{2}})P}{c_{2}P+\sigma^{2}}) \geq r_{1}\\
&\log_{2}(1+ \frac{c_{2}P}{\sigma^{2}}) \geq r_{2} \\
\end{aligned}
\end{equation}

As $|\lambda_1| \geq |\lambda_2|$, we have $-(\frac{\left|\lambda_{1}\right|^{2}}{\left|\lambda_{2}\right|^{2}}-1) \leq 0$. The objective function is monotonically decreasing for $c_{2}$, so the infimum of $c_{2}$ is optimal. Furthermore, $R_1$ is decreasing for $c_2$ and $R_2$ is increasing for $c_2$. The lowerbound of $c_{2}$ is depended on the second constraint $R_2 \geq r_2$ of Problem \eqref{eq_problem2_case1}.
\begin{equation}
\begin{aligned}
\log_{2}(1+ \frac{c_{2}P}{\sigma^{2}}) \geq r_{2} \Rightarrow c_{2} \geq \frac{(2^{r_{2}}-1)\sigma^{2}}{P}
\end{aligned}
\end{equation}

Hereto, we have solved the power control and beam gain allocation sub-problem in Case 1, i.e., Problem \eqref{eq_problem_sub1_case1}. As we have mentioned before, Problem \eqref{eq_problem_sub1_case1} and Problem \eqref{eq_problem_sub1_case2} are similar to each other, which means Problem \eqref{eq_problem_sub1_case2} can also be solved by the above method. We give the following theorem to compare the optimal solution between these two problems.

\begin{theorem} The maxima of the objective function in the power control and beam gain allocation sub-problem under Case 1 is larger than that under Case 2.
\end{theorem}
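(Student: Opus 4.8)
The plan is to reduce each sub-problem to a single scalar optimization over $c_2$ and then compare the two optimal values directly. First I would fix $p_1^\star=p_2^\star=P$ in both cases. For Case 1 this is exactly Lemma 2; for Case 2 the power-increasing argument must be adapted, since $p_1$ now appears in the interference term of $R_2^{(2)}$. Concretely, $p_2=P$ still follows by raising $p_2$ with everything else held fixed (this raises both $R_2^{(2)}$ and the objective), while $p_1=P$ follows by raising $p_1$ and simultaneously lowering $c_1$ so that the product $c_1 p_1$—and hence $R_1^{(2)}$ together with the interference term $c_1 p_1+\sigma^2$ in $R_2^{(2)}$—is held fixed; by Lemma 1 this forces $c_2$ upward and strictly increases the objective while keeping both rate constraints satisfied. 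With full power substituted and Lemma 1 used to write $c_1=|\lambda_1|^2(N-c_2/|\lambda_2|^2)$, both problems share the \emph{same} objective $f(c_2)=\log_2\!\big(1+\tfrac{(|\lambda_1|^2 N-(|\lambda_1|^2/|\lambda_2|^2-1)c_2)P}{\sigma^2}\big)$, which is monotonically decreasing in $c_2$ because $|\lambda_1|\ge|\lambda_2|$.

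Since the objective is the same decreasing function of $c_2$ in both cases, each maximum is attained at the \emph{smallest} feasible $c_2$. The next step is to observe that in each case this left endpoint is set by the User-2 constraint $R_2\ge r_2$: the User-1 constraint $R_1\ge r_1$ only imposes an upper bound on $c_2$, because $c_1$—and therefore $R_1$—decreases as $c_2$ grows. Hence the whole theorem reduces to comparing the two scalar thresholds $c_{2,\min}^{(1)}$ and $c_{2,\min}^{(2)}$ obtained from $R_2^{(1)}\ge r_2$ and $R_2^{(2)}\ge r_2$.

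The core of the argument is a pointwise comparison of the two User-2 rate functions at a common $c_2$ (with $p_2=P$ and the induced $c_1$): $R_2^{(1)}(c_2)=\log_2(1+\tfrac{c_2 P}{\sigma^2})$ against $R_2^{(2)}(c_2)=\log_2(1+\tfrac{c_2 P}{c_1 P+\sigma^2})$. Because Case 1 decodes $s_2$ last and thus sees no residual interference from $s_1$, its denominator $\sigma^2$ never exceeds Case 2's denominator $c_1 P+\sigma^2$, so $R_2^{(1)}(c_2)\ge R_2^{(2)}(c_2)$ for every $c_2$, with equality only when $c_1=0$. Both functions are increasing in $c_2$, so Case 1 reaches the threshold $r_2$ at a smaller value: $c_{2,\min}^{(1)}\le c_{2,\min}^{(2)}$. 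Pushing this through the decreasing objective gives $f(c_{2,\min}^{(1)})\ge f(c_{2,\min}^{(2)})$, i.e. the Case 1 maximum dominates the Case 2 maximum, with strict dominance in the generic case $|\lambda_1|>|\lambda_2|$, $r_1>0$.

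I expect the main obstacle to be the bookkeeping around feasibility and strictness rather than the central inequality. I must assume—or verify—that both reduced feasible intervals are nonempty so that the two maxima exist and the comparison is meaningful, and I must confirm that the left endpoint is genuinely the optimizer in each case. A clean way to make this airtight, and to recover explicit values, is to solve $R_2^{(k)}(c_2)=r_2$ in closed form, giving $c_{2,\min}^{(1)}=\frac{(2^{r_2}-1)\sigma^2}{P}$ and $c_{2,\min}^{(2)}=\frac{(2^{r_2}-1)(|\lambda_1|^2 N P+\sigma^2)}{P(1+(2^{r_2}-1)|\lambda_1|^2/|\lambda_2|^2)}$; the inequality $c_{2,\min}^{(1)}\le c_{2,\min}^{(2)}$ then collapses to $2^{r_2}-1\le NP|\lambda_2|^2/\sigma^2$, which is precisely the condition that rate $r_2$ be achievable for User 2 and is therefore guaranteed by feasibility.
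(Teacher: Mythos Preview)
Your proposal is correct and follows essentially the same route as the paper's proof: reduce both cases to a one-variable problem in $c_2$ with the common decreasing objective, identify the User-2 rate constraint as the binding lower bound, and compare the two thresholds via the pointwise inequality $R_2^{(1)}(c_2)\ge R_2^{(2)}(c_2)$ (equivalently $\sigma^2\le c_1P+\sigma^2$) to conclude $c_{2,\min}^{(1)}\le c_{2,\min}^{(2)}$. Your treatment is in fact slightly more careful than the paper's, which simply asserts that Lemma~2 carries over to Case~2, whereas you spell out the adapted power-raising argument for $p_1$; your closed-form feasibility check at the end is extra but consistent with the paper's derivation.
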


\begin{proof}
See Appendix A.
\end{proof}

Theorem 1 shows the comparison of the optimal solutions between Case 1 and Case 2, which proves that the optimal order is to decode the signal of User 1 first, i.e., the one with higher channel gain. Consequently, the optimal values of $\left|{{\mathbf{h}}}_{2}^{\rm{H}}\mathbf{w}\right|^{2}$ and $\left|{{\mathbf{h}}}_{1}^{\rm{H}}\mathbf{w}\right|^{2}$ are
\begin{equation}\label{optimal_beam}
\left\{\begin{aligned}
&c_{2}^{\star}=\frac{(2^{r_{2}}-1)\sigma^{2}}{P}\\
&c_{1}^{\star}=\left|\lambda_{1}\right|^{2}(N-\frac{c_{2}^{\star}}
{\left|\lambda_{2}\right|^{2}})\\
\end{aligned}\right.
\end{equation}

Hereto, we have solved the power control and beam gain allocation sub-problem, i.e., we have found the optimal solution of Problem \eqref{eq_problem_sub1_case1}/\eqref{eq_problem_sub1_case2} and proved that the the optimal solution of Problem \eqref{eq_problem_sub1_case1} is better than that of Problem \eqref{eq_problem_sub1_case2}, which means that decoding $s_1$ first is optimal. As the optimal solution $\{c_1^\star,c_2^\star,p_1^\star,p_2^\star\}$ is obtained under the assumption of the ideal beamforming, i.e., we assume Lemma 1 holds. However, $\{c_1^\star,c_2^\star,p_1^\star,p_2^\star\}$ may not be an optimal solution of the original problem, i.e., Problem \eqref{eq_problem}, because a beamforming vector with beam gains $\{c_1^\star,c_2^\star\}$ may not be found under the CM constraint. Hence, we say the optimal achievable sum rate of Problem \eqref{eq_problem_sub1_case1} is an upper bound of that of the original problem.

\subsection{Solution of the Beamforming Sub-Problem}
In this subsection, we solve the beamforming sub-problem, i.e., we solve Problem \eqref{eq_problem_sub2} to design an appropriate ${\bf{w}}$ to realize the user beam gains $c_{1}^\star$ and $c_{2}^\star$. However, as we have mentioned before, the beamforming vector with beam gains $\{c_1^\star,c_2^\star\}$ may not be found because of the sidelobe in beam pattern. Proper relaxation should be adopted to obtain the appropriate ${\bf{w}}$ in Problem \eqref{eq_problem_sub2}.
On one hand, the optimal value of $c_2$ is the lowerbound, thus there should be an constraint $\left|{{\mathbf{h}}}_{2}^{\rm{H}}\mathbf{w}\right|^2 \geq c_2^\star$, otherwise the constraint $R_2 \geq r_2$ can not be feasible. On the other hand, we have proved the objective function is monotonically decreasing for $c_{2}$ with the ideal beamforming assumption in the previous subsection. In other words, the objective function is monotonically increasing for $c_{1}$. The maximization of $\left|{{\mathbf{h}}}_{1}^{\rm{H}}\mathbf{w}\right|^2$ is equivalent to the maximization of achievable sum rate. Hence, Problem \eqref{eq_problem_sub2} can be relaxed as
\begin{equation} \label{eq_problem_beamforming1}
\begin{aligned}
\mathop{\mathrm{Maximize}}\limits_{{\mathbf{w}}}~~~~&\left|{{\mathbf{h}}}_{1}^{\rm{H}}\mathbf{w}\right|^2\\
\mathrm{Subject~to}~~~&\left|{{\mathbf{h}}}_{2}^{\rm{H}}\mathbf{w}\right|^2 \geq c_{2}^{\star} \\
&|[{\mathbf{w}}]_k|=\frac{1}{\sqrt{N}},~k=1,2,...,N
\end{aligned}
\end{equation}
Define $g=\sqrt{\frac{c_{2}^{\star}}{\left|\lambda_{2}\right|^{2}}}$, the problem above can be rewritten as
\begin{equation} \label{eq_problem_beamforming2}
\begin{aligned}
\mathop{\mathrm{Maximize}}\limits_{{\mathbf{w}}}~~~~&\left|{{\mathbf{a}}}_{1}^{\rm{H}}\mathbf{w}\right|\\
\mathrm{Subject~to}~~~&\left|{{\mathbf{a}}}_{2}^{\rm{H}}\mathbf{w}\right| \geq g \\
&|[{\mathbf{w}}]_k|=\frac{1}{\sqrt{N}},~k=1,2,...,N
\end{aligned}
\end{equation}
where ${\mathbf{a}}_{i}\triangleq \mathbf{a}(N,\Omega_i)$ for $i=1,2$.

Problem \eqref{eq_problem_beamforming2} is also non-convex. The problem is still difficult to
solve due to the equality constraints. Therefore, we relax the equality constraints $|[{\mathbf{w}}]_k|=\frac{1}{\sqrt{N}}$ with inequality constraints $|[{\mathbf{w}}]_k| \leq \frac{1}{\sqrt{N}}$, which is convex. We reformulate the beamforming problem as
\begin{equation} \label{eq_problem_beamforming3}
\begin{aligned}
\mathop{\mathrm{Maximize}}\limits_{{\mathbf{w}}}~~~~&\left|{{\mathbf{a}}}_{1}^{\rm{H}}\mathbf{w}\right|\\
\mathrm{Subject~to}~~~&\left|{{\mathbf{a}}}_{2}^{\rm{H}}\mathbf{w}\right| \geq g \\
&|[{\mathbf{w}}]_k| \leq \frac{1}{\sqrt{N}},~k=1,2,...,N
\end{aligned}
\end{equation}

\begin{theorem} If $\mathbf{w}_{0}$ is the optimal solution of Problem \eqref{eq_problem_beamforming3}, then, $|[{\mathbf{w}_{0}}]_k|=\frac{1}{\sqrt{N}},~k=1,2,...,N$.
\end{theorem}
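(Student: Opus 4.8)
The plan is to prove the statement by contradiction, thereby showing that the modulus-bound relaxation carried out in passing from \eqref{eq_problem_beamforming2} to \eqref{eq_problem_beamforming3} is automatically tight at the optimum. Suppose $\mathbf{w}_{0}$ is optimal for \eqref{eq_problem_beamforming3} yet some coordinate is strictly interior, say $|[\mathbf{w}_{0}]_{k}|<\frac{1}{\sqrt{N}}$. I would perturb only this single coordinate, setting $[\mathbf{w}]_{k}=[\mathbf{w}_{0}]_{k}+\Delta$ with $\Delta$ small and leaving every other entry untouched (so their modulus constraints are unaffected), and then exhibit a direction $\Delta$ that strictly increases the objective while preserving feasibility, contradicting optimality.

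For the setup, write $S_{1}=\mathbf{a}_{1}^{\mathrm{H}}\mathbf{w}_{0}$ and $S_{2}=\mathbf{a}_{2}^{\mathrm{H}}\mathbf{w}_{0}$. Since $[\mathbf{a}_{i}]_{l}=e^{j\pi(l-1)\Omega_{i}}$, perturbing only entry $k$ changes $\mathbf{a}_{i}^{\mathrm{H}}\mathbf{w}$ by $e^{-j\pi(k-1)\Omega_{i}}\Delta$, so that
\[
\left|\mathbf{a}_{i}^{\mathrm{H}}\mathbf{w}\right|^{2}=|S_{i}|^{2}+2\,\mathrm{Re}(\bar{a}_{i}\Delta)+|\Delta|^{2},\qquad a_{i}:=S_{i}\,e^{j\pi(k-1)\Omega_{i}},\ i=1,2.
\]
Because $|[\mathbf{w}_{0}]_{k}|<\frac{1}{\sqrt{N}}$ strictly, every sufficiently small $\Delta$ keeps $|[\mathbf{w}_{0}]_{k}+\Delta|\le\frac{1}{\sqrt{N}}$, so the only constraint I must monitor is $|\mathbf{a}_{2}^{\mathrm{H}}\mathbf{w}|\ge g$.

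The main step is the choice of $\Delta$: I would take it in the intersection of the two closed half-planes $\{\Delta:\mathrm{Re}(\bar{a}_{1}\Delta)\ge 0\}$ and $\{\Delta:\mathrm{Re}(\bar{a}_{2}\Delta)\ge 0\}$. Two closed half-planes through the origin of $\mathbb{C}\cong\mathbb{R}^{2}$ always share a nonzero vector --- a full wedge when $a_{1},a_{2}$ are not anti-parallel, and at least their common boundary line when they are --- so a small nonzero $\Delta$ of this kind exists. For such $\Delta$ the displayed identity gives $|\mathbf{a}_{2}^{\mathrm{H}}\mathbf{w}|^{2}\ge|S_{2}|^{2}\ge g^{2}$, so feasibility is preserved, while $|\mathbf{a}_{1}^{\mathrm{H}}\mathbf{w}|^{2}\ge|S_{1}|^{2}+|\Delta|^{2}>|S_{1}|^{2}$, so the objective strictly increases. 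This contradicts the optimality of $\mathbf{w}_{0}$, forcing $|[\mathbf{w}_{0}]_{k}|=\frac{1}{\sqrt{N}}$ for every $k$.

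The step I expect to be the crux is precisely this choice of $\Delta$. The constraint $|\mathbf{a}_{2}^{\mathrm{H}}\mathbf{w}|\ge g$ is a reverse (hence non-convex) inequality, so the naive move along the steepest-ascent direction $\Delta\propto a_{1}$ could push $|\mathbf{a}_{2}^{\mathrm{H}}\mathbf{w}|$ below $g$. The resolution is to insist that the perturbation be simultaneously non-decreasing for \emph{both} magnitudes; the quadratic term $|\Delta|^{2}$ then does the real work, guaranteeing a strict gain in the objective even in the worst case where $a_{1}$ and $a_{2}$ are anti-parallel and $\Delta$ is forced onto the line $\mathrm{Re}(\bar{a}_{1}\Delta)=0$. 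The degenerate subcases ($S_{1}=0$, or $g=0$ so that $S_{2}$ is unconstrained) only make the half-plane conditions easier to satisfy and are covered by the same inequality.
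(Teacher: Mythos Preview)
Your argument is correct. Both you and the paper isolate a single entry $[\mathbf{w}_0]_k$ while freezing the remaining $N-1$ entries, but the technical routes diverge from there. The paper's Appendix~B treats the feasible set for that single entry as a planar region $S_1$ (the intersection of the disc $|[\mathbf{w}_0]_k|\le 1/\sqrt{N}$ with the exterior of the circle $|\mathbf{a}_2^{\rm H}\mathbf{w}|=g$) and argues \emph{globally}: the objective is the distance from a fixed centre to a variable point of $S_1$, and the maximum of this distance must be attained on the outer circular arc $|[\mathbf{w}_0]_k|=1/\sqrt{N}$. Establishing that last claim is done pictorially, via a five-case enumeration of the possible relative positions of the two circles. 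Your argument is instead \emph{local}: you exhibit a feasible ascent direction by observing that two closed half-planes through the origin always share a ray, and then let the quadratic term $|\Delta|^2$ supply the strict improvement even when the linear term vanishes. This is more elementary, handles the degenerate cases ($S_1=0$, $g=0$) uniformly, and avoids the case analysis in the paper's figure; the paper's version, in exchange, gives a more visual explanation of \emph{where} on the modulus circle the optimal entry must sit.
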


\begin{proof}
See Appendix B.
\end{proof}

According to Theorem 2, Problem \eqref{eq_problem_beamforming2} is equivalent to Problem \eqref{eq_problem_beamforming3}. It is clear that an arbitrary phase rotation can be added to the vector $\mathbf{w}$ in Problem \eqref{eq_problem_beamforming3} without affecting the beam gains. Thus, if $\mathbf{w}$ is optimal, so is $\mathbf{w}e^{j\phi}$, where $\phi$ is an arbitrary phase within $[0,2\pi)$. Without loss of generality, we may then choose $\phi$ so that ${\mathbf{a}}_{1}^{\rm{H}}\mathbf{w}$ is real and non-negative. Problem \eqref{eq_problem_beamforming3} is tantamount to
\begin{equation} \label{eq_problem_beamforming4}
\begin{aligned}
\mathop{\mathrm{Maximize}}\limits_{{\mathbf{w}}}~~~~&{{\mathbf{a}}}_{1}^{\rm{H}}\mathbf{w}\\
\mathrm{Subject~to}~~~~&\left|{{\mathbf{a}}}_{2}^{\rm{H}}\mathbf{w}\right| \geq g \\
&|[{\mathbf{w}}]_k| \leq \frac{1}{\sqrt{N}},~k=1,2,...,N
\end{aligned}
\end{equation}

Problem \eqref{eq_problem_beamforming4} is still not convex because of the absolute value operation in the first constraint. Thus, we can split it into a serial of convex optimization problems, i.e., we assume different phases for ${{\mathbf{a}}}_{2}^{\rm{H}}\mathbf{w}$ and obtain $M$ convex problems
\begin{equation} \label{eq_problem_beamforming5}
\begin{aligned}
\mathop{\mathrm{Maximize}}\limits_{{\mathbf{w}}}~~~~&{{\mathbf{a}}}_{1}^{\rm{H}}\mathbf{w}\\
\mathrm{Subject~to}~~~~&\mathrm{Re}({{\mathbf{a}}}_{2}^{\rm{H}}\mathbf{w}e^{2\pi j\frac{m}{M}}) \geq g \\
&|[{\mathbf{w}}]_k| \leq \frac{1}{\sqrt{N}},~k=1,2,...,N
\end{aligned}
\end{equation}
where $M$ is the number of total candidate phases $(m=1,2,\cdots,M)$. Each of these $M$ problems can be efficiently solved by using standard convex optimization tools. We select the solution with the maximal objective among the $M$ optimal solutions as the final solution $\mathbf{w}^\circ$.

Hereto, we have obtained a sub-optimal solution of the original problem \eqref{eq_problem}, i.e., $\{p_{1}^{\star},p_{2}^{\star},\mathbf{w}^\circ\}$. As we have have some relaxations in the sub-problems, it is not sure that whether the solution $\{p_{1}^{\star},p_{2}^{\star},\mathbf{w}^\circ\}$ obtained from the sub-problems \eqref{eq_problem2_case1} and \eqref{eq_problem_beamforming1} is located in the feasible region of Problem \eqref{eq_problem}. The answer is yes and we give the following theorem to demonstrate it.

\begin{theorem} If the feasible region of Problem \eqref{eq_problem} is not empty, then $\{p_{1}^{\star},p_{2}^{\star},\mathbf{w}^\circ\}$ is a solution of Problem \eqref{eq_problem}.
\end{theorem}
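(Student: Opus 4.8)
The plan is to verify directly that the constructed triple $\{p_1^\star,p_2^\star,\mathbf{w}^\circ\}=\{P,P,\mathbf{w}^\circ\}$ satisfies every constraint of Problem~\eqref{eq_problem}, i.e.\ that it lands in the feasible region. Three of the constraints are immediate. The power bounds $0\le P\le P$ hold trivially; by Theorem~2 the maximizer $\mathbf{w}^\circ$ of the relaxed beamforming problem already obeys the CM constraint $|[\mathbf{w}^\circ]_k|=\frac{1}{\sqrt N}$; and, writing $c_i^\circ\triangleq|\mathbf{h}_i^{\rm H}\mathbf{w}^\circ|^2$, the admissibility of $\mathbf{w}^\circ$ for Problem~\eqref{eq_problem_beamforming1} gives $c_2^\circ\ge c_2^\star$, so with $p_2=P$ we get $\frac{c_2^\circ P}{\sigma^2}\ge\frac{c_2^\star P}{\sigma^2}=2^{r_2}-1$, hence $R_2\ge r_2$. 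The only constraint carrying real content is therefore $R_1\ge r_1$.

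To attack $R_1\ge r_1$, I would first exploit the hypothesis that Problem~\eqref{eq_problem} is nonempty: pick any admissible point $\{\tilde p_1,\tilde p_2,\tilde{\mathbf w}\}$ and set $\tilde c_i\triangleq|\mathbf{h}_i^{\rm H}\tilde{\mathbf w}|^2$. Its rate-2 constraint $\frac{\tilde c_2\tilde p_2}{\sigma^2}\ge 2^{r_2}-1$, together with $\tilde p_2\le P$, yields $\tilde c_2\ge\frac{(2^{r_2}-1)\sigma^2}{P}=c_2^\star$, so $\tilde{\mathbf w}$ is itself admissible for the beamforming Problem~\eqref{eq_problem_beamforming1}. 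Since $\mathbf{w}^\circ$ maximizes $|\mathbf{h}_1^{\rm H}\mathbf{w}|^2$ over that feasible set, this gives the key comparison $c_1^\circ\ge\tilde c_1$.

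The heart of the argument is then a single chain for $R_1$ evaluated at $\{P,P,\mathbf{w}^\circ\}$. Using $c_1^\circ\ge\tilde c_1$ and $P\ge\tilde p_1$ in the numerator, and the fact that the interference term sits at its threshold, $c_2^\circ P=c_2^\star P=(2^{r_2}-1)\sigma^2\le\tilde c_2\tilde p_2$, in the denominator, I would show
\begin{equation*}
\frac{c_1^\circ P}{c_2^\circ P+\sigma^2}\ \ge\ \frac{\tilde c_1 P}{c_2^\star P+\sigma^2}\ \ge\ \frac{\tilde c_1\tilde p_1}{\tilde c_2\tilde p_2+\sigma^2}\ \ge\ 2^{r_1}-1,
\end{equation*}
the last step being the rate-1 constraint of the witness point. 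Applying $\log_2(1+\cdot)$ gives $R_1\ge r_1$, which completes the feasibility check.

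The step I expect to be the main obstacle is controlling the interference term $c_2^\circ P$ in the denominator of $R_1$: the clean chain above needs the user-2 beam gain not to overshoot its threshold, i.e.\ $c_2^\circ=c_2^\star$ (equivalently, that the constraint $|\mathbf{h}_2^{\rm H}\mathbf{w}|^2\ge c_2^\star$ is active at the optimum of Problem~\eqref{eq_problem_beamforming1}). I would try to establish this by a trade-off argument---any slack $c_2^\circ>c_2^\star$ ought to be convertible into a strictly larger $c_1^\circ$, contradicting optimality---so that the denominator collapses to $2^{r_2}\sigma^2$. The delicate part is justifying this in every array configuration, notably when the beam matched to User~1 already over-illuminates User~2 (so that $c_1$ is at its ceiling and the constraint appears inactive); here one must argue either that the slack does not degrade $R_1$ below $r_1$, or invoke the monotonicity facts already used in Lemma~2 ($R_1$ increasing in $p_1$, and raising $p_2$ to $P$ preserving $R_2\ge r_2$) to keep the remaining bookkeeping routine once the interference level is pinned down.
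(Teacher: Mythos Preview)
Your proposal and the paper's proof share the same skeleton: the power bounds and the CM constraint are immediate, $R_2\ge r_2$ follows from $c_2^\circ\ge c_2^\star$, and all the work lies in $R_1\ge r_1$. The paper, however, does not run your witness-point chain. It argues by contradiction: it asserts that the combination of the two sub-problems is equivalent to the relaxed problem
\[
\max_{p_1,p_2,\mathbf w}\ R_1 \quad\text{s.t.}\quad R_2\ge r_2,\ 0\le p_i\le P,\ |[\mathbf w]_k|=\tfrac{1}{\sqrt N},
\]
so that $R_1^\star$ is already the maximum of $R_1$ over that set; if $R_1^\star<r_1$, then no point of the relaxed set meets $R_1\ge r_1$, which forces Problem~\eqref{eq_problem} to be infeasible and yields the contradiction. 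This sidesteps tracking an explicit feasible witness $\{\tilde p_1,\tilde p_2,\tilde{\mathbf w}\}$ as you do.

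Both routes, though, hinge on precisely the obstacle you isolate. The paper's asserted equivalence needs $\{P,P,\mathbf w^\circ\}$ to maximize $R_1$ in the relaxed problem; but if $c_2^\circ>c_2^\star$, the interference $c_2^\circ P$ exceeds $(2^{r_2}-1)\sigma^2$ and one could strictly raise $R_1$ by lowering $p_2$ below $P$, so $\{P,P,\mathbf w^\circ\}$ would \emph{not} be the maximizer. The paper handles this only by stating that ``the beam gain of User~2 just insures the minimum gain,'' i.e.\ that the constraint in the beamforming sub-problem is active, without proving it. Your direct chain needs exactly the same fact ($c_2^\circ=c_2^\star$) for its first inequality. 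So you have correctly located the crux; the paper leaves the very same point implicit rather than arguing it out. Your proposed trade-off argument is the natural way to close it, with the edge case you already flag (the User-1-matched beam over-illuminating User~2) being the only genuinely delicate configuration.
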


\begin{proof}
See Appendix C.
\end{proof}

\subsection{Generalization to More-User Case}
Although in this paper we adopt a two-user uplink mmWave-NOMA model, the basic idea of decomposing the original problem into two sub-problems also applies to a more-user uplink mmWave-NOMA system. Based on Lemma 1, the original problem with more users can be decomposed. Lemma 2 is still workable in the more-user case, and the optimal power control is $\{p_i=P,~i=1,2,\cdots K\}$, where $K$ is the number of users. Moreover, the beam gains of lower channel gain users should be set to just satisfy the rate constraints, meanwhile the beam gain of the user with the best channel gain should be maximized in the beamforming sub-problem. However, since the proposed method to solve the 2-user beamforming sub-problem needs to search over $M$ possible phases for one user (see \eqref{eq_problem_beamforming5}), in a $K$-user case, we need to search over $M^{K-1}$ possible phases for $(K-1)$ users. In brief, if the number of active users, i.e., $K$, is not large, the idea of problem decomposition is still applicable to solve the original problem. However, when $K$ is too large, the proposed solution may become not appropriate due to high complexity.


Fortunately, based on the proposed solution, there are many other ways to support more users. For instance, one method is to combine with the OMA strategies to manyfold increase the number of users, or to use a hybrid beamforming structure with multiple RF chains, such that the number of users can be increased by $N_{\rm{RF}}$ times, where $N_{\rm{RF}}$ is the number of RF chains. Another method is to still use an analog beamforming structure and shape a few beams. The difference is that each beam steer towards a group of users rather than only one user in this paper. In such a case, we need to consider beam gain allocation between different user groups and power allocation within each user group. This topic will be studied in detail in our future work.

\section{Performance Evaluations}

\begin{figure*}[t]
\begin{center}
  \includegraphics[width=19 cm]{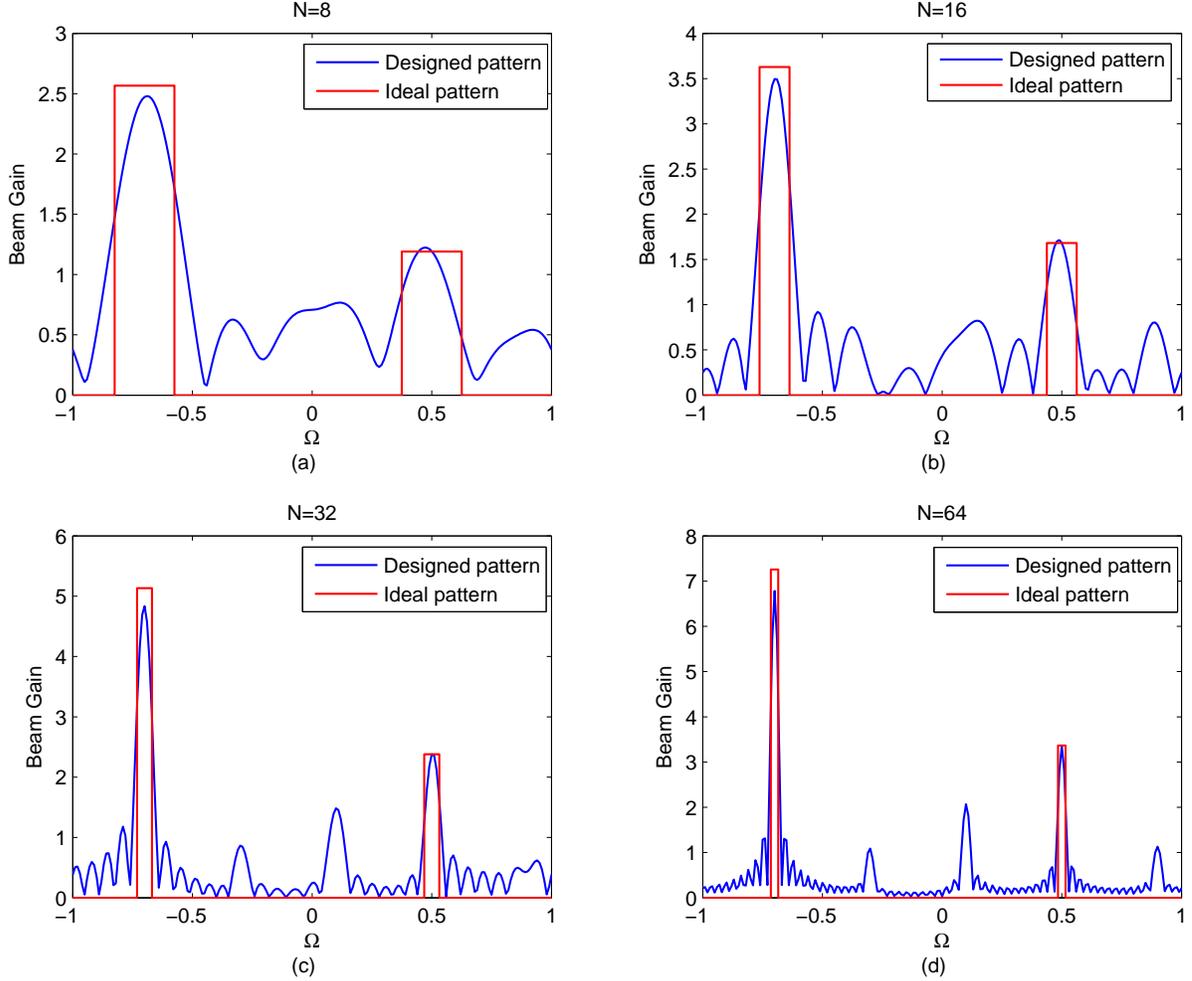}
  \caption{Comparison between the ideal beam pattern and the designed beam pattern}
  \label{fig:beam_pattern}
\end{center}
\end{figure*}
In this section, we evaluate the performance of the proposed joint power control and beamforming method. As aforementioned, the joint problem has been decomposed into two sub-problems, namely the power control and beam gain allocation sub-problem and the beamforming sub-problem. For the power control and beam gain allocation sub-problem, we find the optimal solution; while for the beamforming sub-problem, we find a sub-optimal solution. Hence, we start from the performance evaluation of the beamforming phase.

To compare the ideal beam pattern with the designed beam pattern obtained by solving Problem \eqref{eq_problem_beamforming2}, we assume $|\lambda_1|=0.9$, $|\lambda_2|=0.4$, $\Omega_1=-0.7$, $\Omega_2=0.5$. The desired beam gains are $c_1^\star=2N/3$ and $c_2^\star=(N-c_1^\star/|\lambda_1|^2)|\lambda_2|^2$, where $N$ is the number of antennas at the BS. $M$ in \eqref{eq_problem_beamforming5} is set to 20 in this simulation as well as the following simulations, which is large enough to obtain the best solution. Fig. \ref{fig:beam_pattern} shows the comparison results with $N=8,~16,~32,~64$, and from this figure we can find that the beam gains are significant along the desired user directions, and the beam pattern designed are close to the ideal beam pattern along the user directions, which demonstrates that the solution of the beamforming sub-problem is reasonable.

In addition to the beam pattern comparison, we also compare the user beam gains with varying number of antennas in Fig. \ref{fig:beam_gain_N}, where the parameter settings are the same as those in Fig. \ref{fig:beam_pattern}. From Fig. \ref{fig:beam_gain_N}, we can observe that the designed gain of User 2 is equal to the ideal and there is a small gap between the designed user gain and the ideal beam gain for User 1 (as well as the sum beam gain). This is because the designed beam pattern has side lobes which reduces the gains along the User 1 directions. In comparison, an ideal beam pattern does not have side lobes. Fortunately, the gap increases slowly as $N$ increases when $N\leq 40$, and almost does not increase when $N>40$, which shows that the proposed beamforming method behaves robust against the number of antennas.

Fig. \ref{fig:relative_error_avr} shows the average relative gain errors of User 1, User 2 and the sum gain versus the ideal/desired beam gains. The parameter settings are $|\lambda_1|=0.9$, $|\lambda_2|=0.4$. The AoAs of Users $\Omega_1$ and $\Omega_2$ randomly range in $[-1,1]$ with uniform distribution, and there is a constraint $2/N<|\Omega_1-\Omega_2|<(2-2/N)$ because the width of beam gains we designed is $2/N$ in general. The desired beam gains are $c_1^\star=2N/3$ and $c_2^\star=(N-c_1^\star/|\lambda_1|^2)|\lambda_2|^2$, where $N$ is the number of antennas at the BS. Each point in Fig. \ref{fig:relative_error_avr} is the average performance based on $10^3$ beamforming realizations. We find that the relative gain error of User 2 is near zero, which shows that the beamforming setting is almost ideal for User 2. The relative gain error of User 1 is roughly around 0.1, and the relative error of sum gain is no more than 0.1,and they increase slowly as $N$ increases when $N \leq 56$, and almost does not increase when $N>40$. This result not only demonstrates again that the proposed beamforming method behaves robust against the number of antennas, but also shows the rational of Lemma 1, i.e., the sum beam gain can be roughly seen as a constant versus $N$.

 \begin{figure}[t]
\begin{center}
  \includegraphics[width=\figwidth cm]{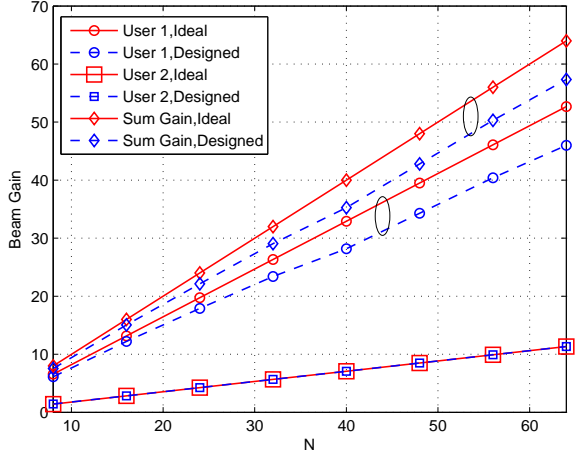}
  \caption{Comparison of user beam gains between the ideal beam gain and the designed beam gain, where the sum gain refers to the summation of the beam gains of User 1 and User 2.}
  \label{fig:beam_gain_N}
\end{center}
\end{figure}

 \begin{figure}[t]
\begin{center}
  \includegraphics[width=\figwidth cm]{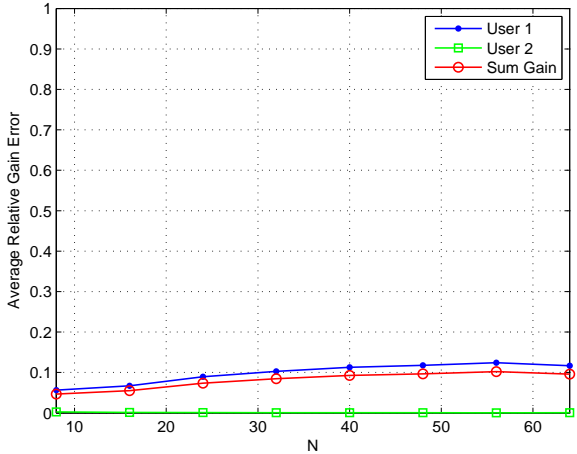}
  \caption{Average relative gain errors versus the ideal beam gains of User 1, User 2 and the sum gain.}
  \label{fig:relative_error_avr}
\end{center}
\end{figure}

The above evaluations show that the solution of the beamforming sub-problem is reasonably close to the ideal one. Next, we evaluate the overall performance. Fig. \ref{fig:achievable_r} shows the comparison between the performance bound and the designed achievable rates with varying rate constraint. The performance bound refers to the achievable rate obtained by solving only the power control and beam gain allocation sub-problem, i.e., with parameters $\{c_{1}^{\star},c_{2}^{\star},p_{1}^{\star},p_{2}^{\star}\}$, where the beamforming is assumed ideal. The designed performance refers to the achievable rate obtained by solving both the power control and beam gain allocation and beamforming sub-problems, i.e., \eqref{optimal_power} and solution of Problem \eqref{eq_problem_beamforming5}. Relevant parameter settings are $\sigma^2=1$ mW, $P=100$ mW, $N=32$, $|\lambda_1|=0.9$, $|\lambda_1|=0.2$, $\Omega_1=-0.7$, $\Omega_2=0.5$. From Fig. \ref{fig:achievable_r} we can find that the designed achievable rates are close to the ideal achievable rates for both User 1 and User 2, as well as the sum rate, which demonstrates that the proposed solution to the original problem is rational and effective, i.e., it can achieve near-optimal performance. On the other hand, we can find that most beam gain is allocated to User 1, which has the better channel condition, so as to optimize the sum rate. Only necessary beam gain is allocated to User 2 to satisfy the rate constraint. That is why User 2 always achieves an achievable rate equal to the rate constraint.

 \begin{figure}[t]
\begin{center}
  \includegraphics[width=\figwidth cm]{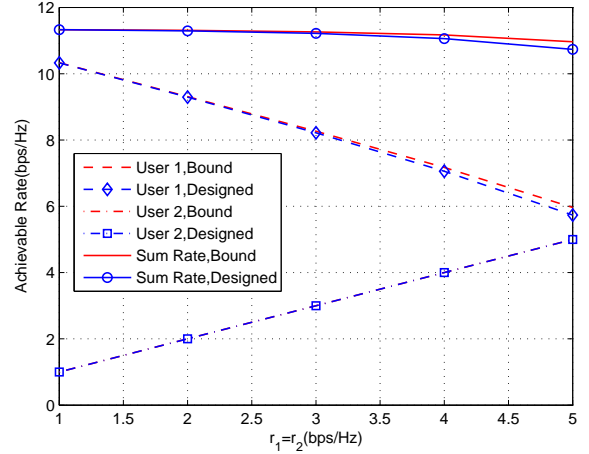}
  \caption{Comparison between the performance bound and the designed achievable rates with varying rate constraint.}
  \label{fig:achievable_r}
\end{center}
\end{figure}

Fig. \ref{fig:achievable_P} shows the comparison between the performance bound and the designed achievable rates with varying maximal power to noise ratio. Relevant parameter settings are $N=32$, $|\lambda_1|=0.9$, $|\lambda_2|=0.2$, $\Omega_1=-0.7$, $\Omega_2=0.5$, $r_1=r_2=3$ bps/Hz. From this figure we can observe the similar results as those from Fig. \ref{fig:achievable_r}, i.e., the designed achievable rates are close to the ideal achievable rates for both User 1 and User 2, as well as the sum rate, and most beam gain is allocated to User 1 to optimize the sum rate, while only necessary beam gain is allocated to User 2 to satisfy the rate constraint.

\begin{figure}[t]
\begin{center}
  \includegraphics[width=\figwidth cm]{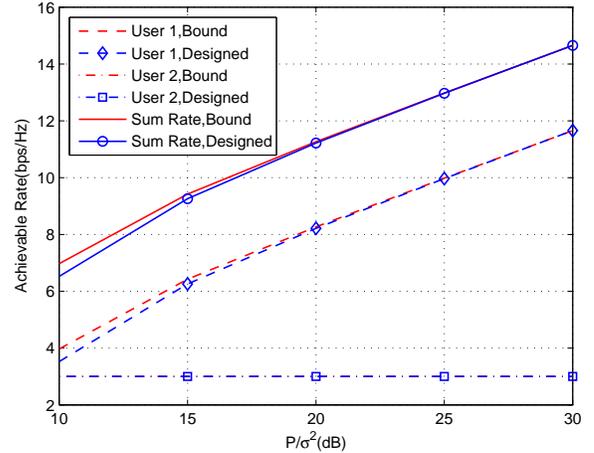}
  \caption{Comparison between the performance bound and the designed achievable rates with varying maximal power to noise ratio.}
  \label{fig:achievable_P}
\end{center}
\end{figure}

Figs. \ref{fig:comparison_r} and \ref{fig:comparison_P} show the comparison results of sum rate between theoretical mmWave NOMA, practical mmWave NOMA and OMA with varying rate constraint and varying maximal power to noise ratio, respectively, where $N=32$ and $L_1=L_2=L=4$. User 1 has a better channel condition than User 2, i.e., the average power ratio of them is $(1/0.3)^2$. For Fig. \ref{fig:comparison_r}, $\frac{P}{\sigma^2}=25$ dB, while for Fig. \ref{fig:comparison_P} $r_1=r_2=2$ bps/Hz. Both LOS and NLOS channel models are considered. For LOS channel, the first path is the LOS path, which has a constant power, i.e., $|\lambda_1|=1$ (0 dB), while the coefficients of the other 3 NLOS paths, i.e., $\{\lambda_i\}_{i=2,3,4}$, obey the complex Gaussian distribution with zero mean, and each of them has an average power of -15 dB. For the NLOS channel, the 4 paths are all NLOS paths with zero-mean complex Gaussian distributed coefficients, and each of them has an average power of $1/\sqrt{L}$. Each point in Figs. \ref{fig:comparison_r} and \ref{fig:comparison_P} is the average performance based on $10^3$ channel realizations. With each channel realization, the optimal parameters are obtained by the proposed solution, and the theoretical/practical performances are obtained by computing the sum rates with the effective/original channel. The performance of OMA is obtained based on the assumption that the beams gains of User 1 and User 2 are equal, i.e., $N/2$, and the instantaneous signal power for each user is $2P$. From these two figures we can observe that the theoretical performance is very close to the practical performance, which demonstrates the rational of the proposed method. Moreover, the performance of mmWave NOMA is significantly better than that of OMA under both the LOS and NLOS channels.
\begin{figure}[t]
\begin{center}
  \includegraphics[width=\figwidth cm]{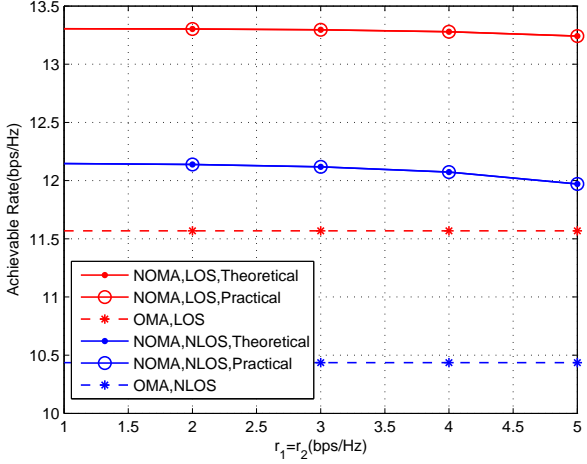}
  \caption{Comparison of sum rate between theoretical mmWave NOMA, practical mmWave NOMA and OMA with varying rate constraint.}
  \label{fig:comparison_r}
\end{center}

\end{figure}
\begin{figure}[t]
\begin{center}
  \includegraphics[width=\figwidth cm]{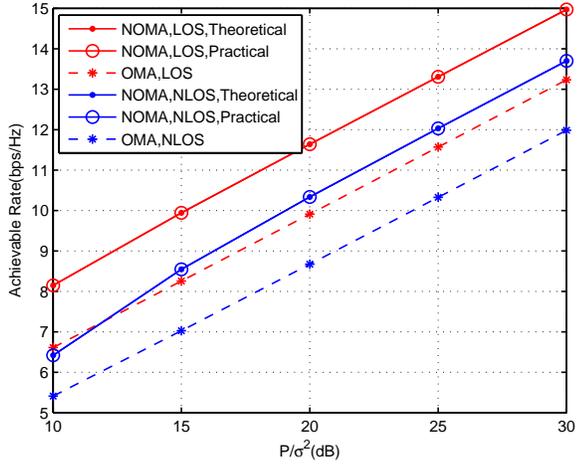}
  \caption{Comparison of sum rate between theoretical mmWave NOMA, practical mmWave NOMA and OMA with varying maximal power to noise
ratio.}
  \label{fig:comparison_P}
\end{center}
\end{figure}

\section{Conclusion}
In this paper we have investigated the problem of how to maximize the sum rate of a 2-user uplink mmWave-NOMA system, where we need to find the beamforming vector to steer to the two users simultaneously subject to an analog beamforming structure at the BS side, and meanwhile select appropriate power for them at the users side. We have proposed a suboptimal solution to this problem, i.e., to decompose the original problem into two sub-problems: one is a power control and beam gain allocation problem, and the other is a beamforming problem under the CM constraint. The original problem can then be solved by solving the two sub-problems. A general system with more users is also studied in this paper, and the basic idea of decomposing the original problem is still workable. Extensive performance evaluations verify the rational of the proposed solution, and demonstrates that the proposed solution can achieve close-to-bound performance, which is distinctively better than OMA.


%

\appendices
\section{Proof of Theorem1}
Lemma 1 and Lemma 2 are still workable in Case 2. We have $p_{1}=P,~p_{2}=P,~c_{1}=\left|\lambda_{1}\right|^{2}(N-\frac{c_{2}}
{\left|\lambda_{2}\right|^{2}})$. Substituting them into Problem \eqref{eq_problem_sub1_case2}, there is only one independent variable $c_{2}$ now. Hence, we can transform the problem as
\begin{equation}\label{eq_problem2_case2}
\begin{aligned}
\mathop{\mathrm{Maximize}}\limits_{c_{2}}~~~~ &\log_{2}(1+ \frac{(\left|\lambda_{1}\right|^{2}N-(\frac{\left|\lambda_{1}\right|^{2}}{\left|\lambda_{2}\right|^{2}}-1)c_{2})P}{\sigma^{2}})\\
\mathrm{Subject~ to}~~~ &\log_{2}(1+ \frac{\left|\lambda_{1}\right|^{2}(N-\frac{c_{2}}
{\left|\lambda_{2}\right|^{2}})P}{\sigma^{2}}) \geq r_{1}\\
&\log_{2}(1+ \frac{c_{2}P}{\left|\lambda_{1}\right|^{2}(N-\frac{c_{2}}
{\left|\lambda_{2}\right|^{2}})P+\sigma^{2}}) \geq r_{2} \\
\end{aligned}
\end{equation}

Similar to Case 1, the objective function in \eqref{eq_problem2_case2} is monotonically decreasing for $c_{2}$, so the infimum of $c_{2}$ is optimal. Furthermore, $R_1$ is decreasing for $c_2$ and $R_2$ is increasing for $c_2$. The lowerbound of $c_{2}$ is depended on the second constraint $R_2 \geq r_2$ of Problem \eqref{eq_problem2_case2}.
\begin{equation}
\begin{aligned}
\log_{2}(1+ \frac{c_{2}P}{\left|\lambda_{1}\right|^{2}(N-\frac{c_{2}}
{\left|\lambda_{2}\right|^{2}})P+\sigma^{2}}) \geq r_{2} \\
\Leftrightarrow c_{2} \geq \frac{(\left|\lambda_{1}\right|^{2}NP+\sigma^{2})(2^{r_{2}}-1)}{(1+\frac{\left|\lambda_{1}\right|^{2}}{\left|\lambda_{2}\right|^{2}}(2^{r_{2}}-1))P}
\end{aligned}
\end{equation}

The lowerbound of $c_{2}$ can be obtained when $R_2=r_2$ in both cases. Denote them as $c_2^{(1)}$ and $c_2^{(2)}$ respectively, and we have
\begin{equation}
\begin{aligned}
&\log_{2}(1+ \frac{c_{2}^{(1)}P}{\sigma^{2}}) = r_{2}\\
&= \log_{2}(1+ \frac{c_{2}^{(2)}P}{\left|\lambda_{1}\right|^{2}(N-\frac{c_{2}^{(2)}}
{\left|\lambda_{2}\right|^{2}})P+\sigma^{2}})\\
&\leq \log_{2}(1+ \frac{c_{2}^{(2)}P}{\sigma^{2}})\\
&\Leftrightarrow c_{2}^{(1)} \leq c_{2}^{(2)}
\end{aligned}
\end{equation}
As we have mentioned before, the objective function in Problem \eqref{eq_problem2_case1} and \eqref{eq_problem2_case2} is identical, which is monotonically decreasing for the variable $c_2$. Then we can conclude that the optimal solution of Case 1 is better than Case 2, because $c_{2}^{(1)} \leq c_{2}^{(2)}$.

\section{Proof of Theorem2}

 \begin{figure*}[t]
\begin{center}
  \includegraphics[width=17 cm]{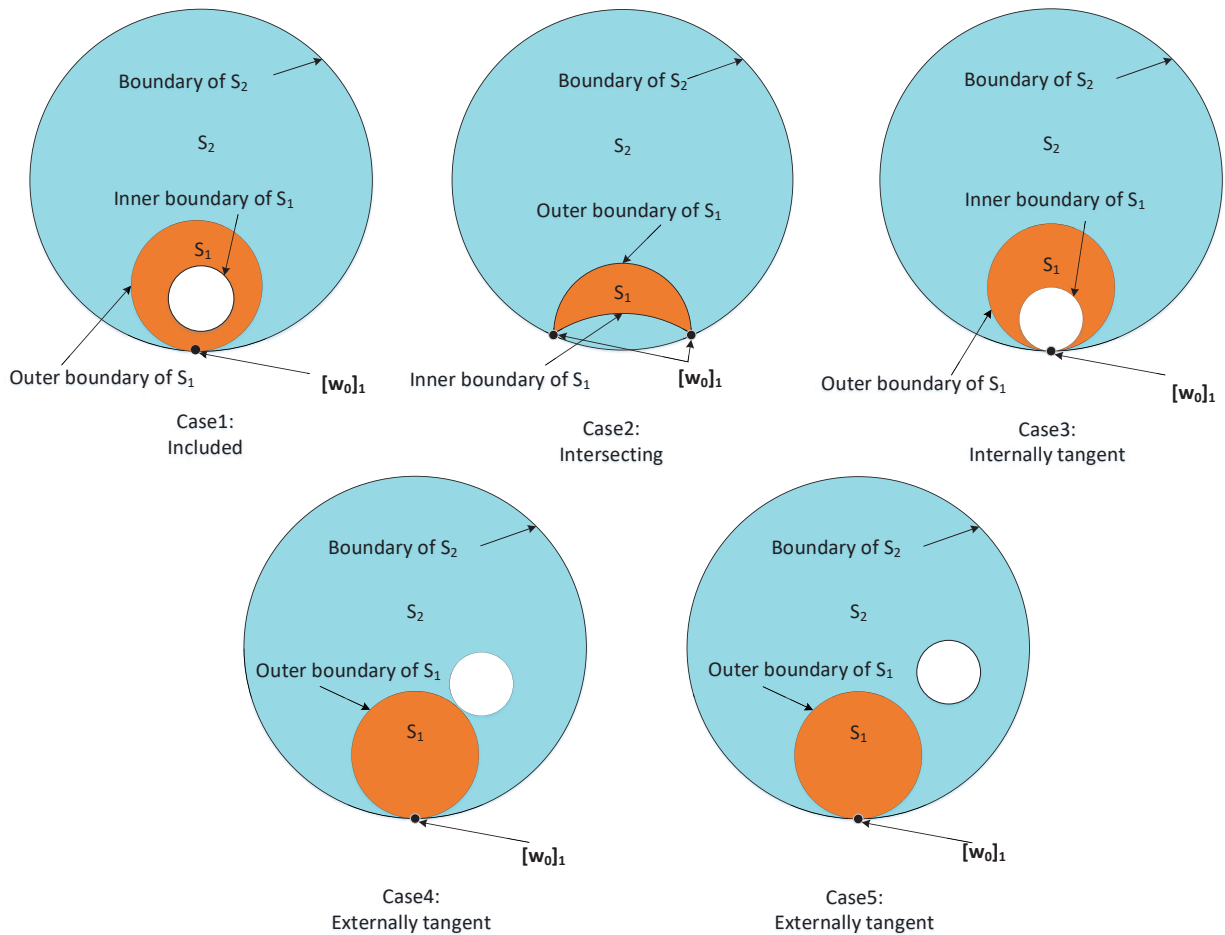}
  \caption{Illustration of the relative position relation between $S_{1}$ and $S_{2}$. On one hand, $S_{1} \subseteq S_{2}$; On the other hand, $[{\mathbf{w}}_{0}]_1$ is the intersection between $S_1$ and the boundary of $S_2$. Thus, no matter what the shape of $S_1$ is, $[{\mathbf{w}}_{0}]_1$ must be located in the outer boundary of $S_1$. }
  \label{fig:case123}
\end{center}
\end{figure*}

Let $\mathbf{w}_{0}$ represent the optimal solution of Problem \eqref{eq_problem_beamforming3}, and
\begin{equation}
\left\{\begin{aligned}
&{\mathbf{a}}_{1}^{\rm{H}}{\mathbf{w}}_{0}=d_{1}e^{\theta_{1}j}\\
&{\mathbf{a}}_{2}^{\rm{H}}{\mathbf{w}}_{0}=d_{2}e^{\theta_{2}j}
\end{aligned}\right.
\end{equation}
where $d_{i}$ and $\theta_{i}$ denote the modulus and phase of ${\mathbf{a}}_{i}^{\rm{H}}{\mathbf{w}}_{0}$, respectively. We will show $|[{\mathbf{w}}_{0}]_i|=\frac{1}{\sqrt{N}}$ for $i=1,2,...,N$. For this sake, we will only prove that $|[{\mathbf{w}}_{0}]_1|=\frac{1}{\sqrt{N}}$ in detail, while $|[{\mathbf{w}}_{0}]_i|=\frac{1}{\sqrt{N}}$ for $i=2,3,...,N$ can be proven similarly. As the modulus of $[{\mathbf{a}}_{i}]_1 (i=1,2)$ is 1, we have $|[{\mathbf{a}}_{i}]_1[{\mathbf{w}}_{0}]_1|=|[{\mathbf{w}}_{0}]_1|\triangleq l$.

Denote
\begin{equation}
\left\{\begin{aligned}
&[{\mathbf{a}}_{1}^{\rm{H}}]_{1}[{\mathbf{w}}_{0}]_{1}=le^{\mu_{1}j}\\
&[{\mathbf{a}}_{2}^{\rm{H}}]_{1}[{\mathbf{w}}_{0}]_{1}=le^{\mu_{2}j}
\end{aligned}\right.
\end{equation}
and
\begin{equation}
\left\{\begin{aligned}
&\sum\limits_{k=2}^{N}[{\mathbf{a}}_{1}^{\rm{H}}]_{k}[{\mathbf{w}}_{0}]_{k}=b_{1}e^{\nu_{1}j}\\
&\sum\limits_{k=2}^{N}[{\mathbf{a}}_{2}^{\rm{H}}]_{k}[{\mathbf{w}}_{0}]_{k}=b_{2}e^{\nu_{2}j}
\end{aligned}\right.
\end{equation}

Obviously, $le^{\mu_{i}j}+b_{i}e^{\nu_{i}j}=d_{i}e^{\theta_{i}j}$. Note that the phase difference between $[{\mathbf{a}}_{1}^{\rm{H}}]_{1}[{\mathbf{w}}_{0}]_{1}$ and $[{\mathbf{a}}_{2}^{\rm{H}}]_{1}[{\mathbf{w}}_{0}]_{1}$, i.e., $({\mu}_{2}-{\mu}_{1})$, does not dependent on $[{\mathbf{w}}_{0}]_{1}$. Next, we will show that the optimal $[{\mathbf{w}_{0}}]_1$ must be on the constraint boundary $|[{\mathbf{w}}_{0}]_1|=\frac{1}{\sqrt{N}}$.

For the constraints in Problem \eqref{eq_problem_beamforming3}. For fixed $[{\mathbf{w}}_{0}]_{k}~(k=2,3,\cdots,N)$, the constraints for $[{\mathbf{w}}_{0}]_{1}$ are
\begin{equation}\label{crescent}
\left\{\begin{aligned}
&|{\mathbf{a}}_{2}^{\rm{H}}{\mathbf{w}}_{0}|=|l e^{\mu_{2}j}+b_{2}e^{\nu_{2}j}| \geq g\\
&|[{\mathbf{w}}_{0}]_{1}|=l \leq \frac{1}{\sqrt{N}}
\end{aligned}\right.
\end{equation}

Consider the above variables in the polar coordinate system, where the constraints \eqref{crescent} denote a feasible region in the 2-dimensional plane. $|l e^{\mu_{2}j}+b_{2}e^{\nu_{2}j}| \geq g$ is the outside part of a circle and $l \leq \frac{1}{\sqrt{N}}$ is the inside part of a circle. Hence, the feasible region of \eqref{crescent}, denoted by $S_{1}$, is a closed set with two boundaries\footnote{$S_1$ is not empty because there is at least one point, $[{\mathbf{w}}_{0}]_{1}$.}. One is the equation $|l e^{\mu_{2}j}+b_{2}e^{\nu_{2}j}| = g$. We define it as the inner boundary of $S_{1}$. The other is the equation $l = \frac{1}{\sqrt{N}}$. We define it as the outer boundary of $S_{1}$. The shape of $S_{1}$ depends on the relative position relation between the two circles, i.e., included, intersecting, internally tangent, externally tangent and separate, which are shown in Fig. \ref{fig:case123}, where $S_2$ is defined below.

It is assumed that the objective function of Problem \eqref{eq_problem_beamforming3} is maximum at the point $[{\mathbf{w}_{0}}]_1$, which is described by
\begin{equation}\label{disc}
\begin{aligned}
&|{\mathbf{a}}_{1}^{\rm{H}}{\mathbf{w}}_{0}|=|le^{\mu_{1}j}+b_{1}e^{\nu_{1}j}| = d_{1}\\
\Leftrightarrow &|le^{\mu_{1}j+({\mu}_{2}-{\mu}_{1})j}+b_{1}e^{\nu_{1}j+({\mu}_{2}-{\mu}_{1})j}| = d_{1}\\
\Leftrightarrow &|le^{\mu_{2}j}+b_{1}e^{\nu_{1}j+({\mu}_{2}-{\mu}_{1})j}| = d_{1}
\end{aligned}
\end{equation}
where $b_{1}$, $\nu_{1}$ and $({\mu}_{2}-{\mu}_{1})$ are constant. In other words, $d_1$ is the maximum distance from the point $-b_{1}e^{\nu_{1}j+({\mu}_{2}-{\mu}_{1})j}$ to the region $S_{1}$. If we draw a circle centered at the point $-b_{1}e^{\nu_{1}j+({\mu}_{2}-{\mu}_{1})j}$ with the radius of $d_1$, then $S_1$ is certainly located inside of this circle. Otherwise the the point outside of this circle is optimal, which is contradictory to the assumption. The inside part of this circle is described by $|le^{\mu_{2}j}+b_{1}e^{\nu_{1}j+({\mu}_{2}-{\mu}_{1})j}| \leq d_{1}$ and denoted by $S_{2}$ (see also Fig. \ref{fig:case123}). In particular, we define the equation $|le^{\mu_{2}j}+b_{1}e^{\nu_{1}j+({\mu}_{2}-{\mu}_{1})j}| = d_{1}$ as the boundary of $S_{2}$. Then we have $S_{1} \subseteq S_{2}$. It can be seen that ${\mathbf{w}}_{0}$ is located in the outer boundary of $S_{1}$ in Fig. \ref{fig:case123}, no matter what the shape of $S_{1}$ is. Thus, $|[{\mathbf{w}}_{0}]_1|=\frac{1}{\sqrt{N}}$.

\section{Proof of Theorem3}
It is obvious that $p_{i}^{\star}=P~(i=1,2)$ satisfies the power constraint for User $i$. And the CM constraint for the beamforming vector $\mathbf{w}^\circ$ is also considered in Problem \eqref{eq_problem_beamforming1}. Thus we just need to verify that
\begin{equation}
\left\{
\begin{aligned}
&R_{1}^{\star} \geq r_{1}\\
&R_{2}^{\star} \geq r_{2} \\
\end{aligned}
\right.
\end{equation}
where $R_{i}^{\star}~(i=1,2)$ is the achievable rate of User $i$ under proposed solution $\{p_{1}^{\star},p_{2}^{\star},\mathbf{w}^\circ\}$.

On one hand, we have
\begin{equation}
\begin{aligned}
R_{2}^{\star}&=\log_{2}(1+ \frac{\left |\mathbf{h}_{2}^{\rm{H}}\mathbf{w}^\circ \right |^{2}p_{2}^{\star}}{\sigma^{2}}) \\
&\geq \log_{2}(1+ \frac{c_{2}^{\star}P}{\sigma^{2}})\\
&=r_{2}
\end{aligned}
\end{equation}

One the other hand, in Problem \eqref{eq_problem2_case1}, the optimal solution is located in the boundary of $R_{2}=r_{2}$, which means only necessary beam gain is allocated to User 2 to satisfy the minimum rate constraint and the rest of beam gain is all allocated to User 1. Similar in \eqref{eq_problem_beamforming1}, we try to maximize the beam gain of User 1 while the beam gain of User 2 just insures the minimum gain to satisfy the rate constraint. Thus the combination of Problem \eqref{eq_problem2_case1} and \eqref{eq_problem_beamforming1} is equivalent to
\begin{equation}\label{eq_problem_com}
\begin{aligned}
\mathop{\mathrm{Maximize}}\limits_{p_1,p_2,{\mathbf{w}}}~~~~ &R_{1}\\
\mathrm{Subject~ to}~~~ &R_{2} \geq r_{2}\\
&0 \leq p_{1},p_{2} \leq P \\
&|[{\mathbf{w}}]_k|=\frac{1}{\sqrt{N}},~k=1,2,...,N\\
\end{aligned}
\end{equation}

Assume that $R_{1}^{\star} < r_{1}$, which means that under the constraints of Problem \eqref{eq_problem_com}, the maximum value of $R_{1}$ is smaller than $r_{1}$. In other words, the constraint $R_{1} \geq r_{1}$ in Problem \eqref{eq_problem} cannot be feasible. The feasible region of Problem \eqref{eq_problem} is empty. However, we have assumed that the feasible region of Problem \eqref{eq_problem} is not empty in Theorem 3, which is contradictory. Thus there must be $R_{1}^{\star} \geq r_{1}$.



\end{document}